\newtheorem{theorem}{Theorem}[section]
\newtheorem{lemma}[theorem]{Lemma}
\newtheorem{claim}[theorem]{Claim}
\theoremstyle{remark} \newtheorem{remark}[theorem]{Remark}}
\theoremstyle{definition}  }
\newcommand{\bg}[1]{\medskip\noindent{\it #1}}
\newcommand{\R}{\ensuremath{\mathbb R}}
\newcommand{\A}{\ensuremath{\mathcal{A}}}
\newcommand{\I}{\ensuremath{\mathcal I}}
\newcommand{\D}{\ensuremath{\mathcal D}}
\newcommand{\OPT}{\ensuremath{\mathit{OPT}}}
\newcommand{\sm}{\ensuremath{\setminus}}
\newcommand{\es}{\ensuremath{\emptyset}}
\newcommand{\cost}{\ensuremath{\mathit{cost}}}
\newcommand{\poly}{\operatorname{poly}}
\newcommand{\junk}[1]{}
\newcommand{\bB}{\ensuremath{\overline B}}
\newcommand{\assign}{\ensuremath{\leftarrow}}
\newcommand{\ttht}{\ensuremath{\tilde\theta}}
\newcommand{\tw}{\ensuremath{\widetilde w}}
\newcommand{\tx}{\ensuremath{\tilde x}}
\newcommand{\ty}{\ensuremath{\tilde y}}
\newcommand{\tz}{\ensuremath{\tilde z}}
\newcommand{\lp}{\ensuremath{\mathsf{LP}}}
\newcommand{\sg}{\ensuremath{\sigma}}
\newcommand{\e}{\ensuremath{\epsilon}}
\newcommand{\gm}{\ensuremath{\gamma}}
\newcommand{\ld}{\ensuremath{\lambda}}
\newcommand{\al}{\ensuremath{\alpha}}
\newcommand{\tht}{\ensuremath{\theta}}
\newcommand{\iopt}{\ensuremath{\mathit{opt}}}
\newcommand{\prim}[1]{(P$_{{#1}}$)\xspace}
\newcommand{\acost}{\ensuremath{\vec{c}}}
\newcommand{\down}{\ensuremath{\mskip2mu\downarrow}}
\newcommand{\kmedlp}{\ensuremath{\text{($k$med-P)}}\xspace}
\newcommand{\obj}{\cost}
\newcommand{\vo}{\ensuremath{\vec{o}^{\down}}}
\newcommand{\vc}{\ensuremath{\vec{c}^{\down}}}
\newcommand{\est}{\ensuremath{\mathsf{est}}}
\newcommand{\avg}{\ensuremath{\mathsf{avg}}}
\newcommand{\grp}{\ensuremath{I}}
\title{Interpolating between $k$-Median and $k$-Center: \\ Approximation Algorithms for Ordered $k$-Median} 
\author{
         Deeparnab Chakrabarty\thanks{{\tt deeparnab@gmail.com}.
         Dept. of Computer Science, Dartmouth College, Hanover, NH 03755-3510, USA.}
\and
         Chaitanya Swamy\thanks{{\tt cswamy@uwaterloo.ca}.
         Dept. of Combinatorics and Optimization, Univ. Waterloo, Waterloo, ON N2L 3G1,
         Canada. 
         Supported in part by NSERC grant 327620-09 and an NSERC Discovery Accelerator
         Supplement Award.}
}
\date{}
\begin{document}

\maketitle 

\begin{abstract}
We consider a generalization of $k$-median and $k$-center, called the 
{\em ordered $k$-median} problem. In this problem, we are given a metric space
$(\mathcal{D},\{c_{ij}\})$ with $n=|\mathcal{D}|$ points, and a non-increasing weight
vector $w\in\mathbb{R}_+^n$, and the goal is to open $k$ centers and assign each point
each point $j\in\mathcal{D}$ to a center so as to minimize  
$w_1\cdot\text{(largest assignment cost)}+w_2\cdot\text{(second-largest assignment
  cost)}+\ldots+w_n\cdot\text{($n$-th largest assignment cost)}$. We give an
$(18+\epsilon)$-approximation algorithm for this problem. Our algorithms utilize
Lagrangian relaxation and the primal-dual schema, combined with an enumeration procedure
of Aouad and Segev. For the special case of $\{0,1\}$-weights, which models the problem of
minimizing the $\ell$ largest assignment costs that is interesting in and of by itself, we
provide a novel reduction to the (standard) $k$-median problem, showing that LP-relative
guarantees for $k$-median translate to guarantees for the ordered $k$-median problem; this
yields a nice and clean $(8.5+\epsilon)$-approximation algorithm for $\{0,1\}$ weights. 
\end{abstract}

\section{Introduction}
We consider the following common generalization of $k$-median and $k$-center, which has
been referred to as the {\em ordered $k$-median} problem~\cite{PuertoT05}. We are given a
metric space $(\D,\{c_{ij}\}_{i,j\in\D})$, and an integer $k\geq 0$. We will often refer
to points in $\D$ as clients. 
We are also given non-increasing nonnegative weights $w_1\geq w_2\geq\ldots\geq w_n\geq 0$, 
where $n=|\D|$. For a vector $v\in\R^\D$, we use $v^{\down}$ to denote the vector $v$ with
coordinates sorted in non-increasing order. That is, we have $v^{\down}_i=v_{\sg(i)}$, where
$\sg$ is a permutation of $\D$ such that $v_{\sg(1)}\geq v_{\sg(2)}\geq\ldots v_{\sg(n)}$.
The goal in the ordered $k$-median problem is to choose a set $F$ of $k$ points from $\D$
as centers (or ``facilities''), and assign each client $j\in\D$ to a center $i(j)\in F$,
so as to minimize 
$$
\obj\bigl(w; \acost:=\{c_{i(j)j}\}_{j\in\D}\bigr) := w^T\acost^{\down} =
\sum_{j=1}^nw_j\acost^{\down}_j.
$$
Observe that we may assume that, without loss of generality, each client $j$ is assigned
to the center $i(j)$ in $F$ that is nearest to it. We may assume that $|\D|>k$, otherwise
the problem becomes trivial. It will be useful to notice that, equivalently, we have 
$$
\obj(w;\acost)=\max_{\text{permutations $\pi$ of $\D$}} \sum_{i=1}^n w_i\acost_{\pi(i)}
$$
which shows that $\obj(w;x)$ is a convex function of $x$, and 
in fact a seminorm on $\R^\D$.  

Observe that setting $w_1=1=w_2=\ldots=w_n$ yields the $k$-median problem; on
the other hand, by setting $w_1=1,\ w_2=0=w_3=\ldots=w_n$, we obtain the $k$-center
problem. Thus, the ordered $k$-median problem nicely interpolates between the $k$-median
and $k$-center problems. In particular, an interesting case is the setting with $\{0,1\}$
weights, which means that for some $\ell\in[n]$, we have $w_1=\ldots=w_\ell=1$, and
$w_{\ell+1}=0=\ldots=w_n$; this captures the problem of minimizing the $\ell$ largest
assignment costs, which Tamir~\cite{Tamir01} calls the {\em $\ell$-centrum problem}.

While the special cases of $k$-median and $k$-center have been considered extensively from
the viewpoint of developing approximation algorithms, much less is known about the
approximability of the ordered $k$-median problem, especially in general metrics. 
Aouad and Segev~\cite{AouadS} obtained a logarithmic-approximation ratio for general
metrics, and Alamdari and Shmoys~\cite{AlamdariS17} obtain a bicriteria approximation for
the special case, where $w$ is a convex combination of $(1,0,\ldots,0)$ and
$\bigl(\frac{1}{n},\frac{1}{n},\ldots,\frac{1}{n}\bigr)$. For other work related to
location theory and ordered-median models, we refer the reader
to~\cite{NickelP05,LaporteNdG15}. 

In our work, we develop an $(18+\e)$-approximation algorithm for the ordered $k$-median
problem. In Section~\ref{unit}, we first develop constant-factor approximation algorithms
for the case of $\{0,1\}$-weights, which introduces many of the ideas needed to handle the
general setting. In Section~\ref{gen}, we generalize these ideas to obtain constant-factor
approximation algorithms for the ordered $k$-median problem with general weights.

\subsection{Relationship with the work of Byrka et al.~\cite{Byrka17}}
Very recently, we learnt that Byrka et al.~\cite{Byrka17} have also obtained an
$O(1)$-approximation guarantee (equal to $38+\e$) for the ordered $k$-median problem. Our
work was done independently and concurrently. In particular, our results for $\{0,1\}$
weights were obtained independently. We use somewhat different techniques, and obtain an
approximation factor that is better than the one obtained in~\cite{Byrka17} (for $\{0,1\}$
weights) via a simpler algorithm and analysis. 

But we would like to make it clear that it was after we learnt of the work
of~\cite{Byrka17} that we realized that our results can be extended to the general
weighted setting. Again, our algorithms and analyses here utilize somewhat different
techniques.

\section{The setting with \boldmath $\{0,1\}$-weights} \label{unit}
We first consider the setting with $\{0,1\}$ weights.
Let $\ell\in[n]$ be such $w_1=\ldots=w_\ell=1$, $w_{\ell+1}=0=\ldots=w_n$.
We abbreviate $\obj(w;\acost)$ to $\obj(\ell;\acost)$, or simply $\obj(\acost)$.
The $\{0,1\}$-weight setting serves as a natural starting point for two reasons. First,
the problem of minimizing the $\ell$ most expensive assignment costs is a natural,
well-motivated problem that is interesting in its own right. Second, the study of the
$\{0,1\}$-case serves to introduce some of the key underlying ideas that are also used to
handle the general setting. Notice also that a non-decreasing weight vector $w$ can be
written as a nonnegative linear-combination of such $\{0,1\}$ weight vectors.  

The natural LP-relaxation for this $\ell$-centrum problem is to augment the natural
LP-relaxation for $k$-median by introducing a new variable $\ld$ to denote the objective
value and impose constraints enforcing that the total assignment cost of any set of $\ell$
clients is at most $\ld$. 
One can show however that this natural LP has an $\Omega(\ell)$ integrality gap. 

Our constant-factor approximation algorithm is based on an alternate novel LP-relaxation
of the problem. Our relaxation is based on the following key insight. Suppose there is a
solution of objective value $B$, and we aim to find a solution of objective value
$O(B)$. Then, it suffices to find a solution where the total assignment cost of clients
having assignment cost larger than $B/\ell$ is $O(B)$: the remaining clients can
contribute an additional cost of at most $B$, since at most $\ell$ such clients count
towards the objective value of our solution. Thus, instead of bounding the cost of every
set of $\ell$ clients, {\em our LP seeks to minimize the total assignment cost of clients
having assignment cost larger than $B/\ell$}.

More precisely, given a ``guess'' $B$ of the optimal
value, we consider the following LP. For $d\geq 0$, define $f_B(d)=d$ if $d>B/\ell$, and
$0$ otherwise. Throughout, $i$ and $j$ index points of $\D$.
\begin{alignat}{3}
\min & \quad & \sum_j\sum_i f_B(&c_{ij})x_{ij} \tag{P$_B$} \label{primal} \\
\text{s.t.} && \sum_i x_{ij} & \geq 1 \qquad && \forall j \label{casgn} \\[-7pt] 
&& 0\leq x_{ij} & \leq y_i && \forall i,j \label{openf} \\
&& \sum_i y_i & \leq k. \label{kmed} 
\end{alignat}
Variable $y_i$ indicates if facility $i$ is open (i.e., $i$ is chosen as a center),
and $x_{ij}$ indicates if client $j$ is assigned to facility $i$. 
The first two constraints say that each client must be assigned to an open facility, 
and the third constraint encodes that at most $k$ centers may be chosen. 

An atypical aspect of our relaxation is that, while an integer solution
corresponds to a solution to our problem, its objective value under \eqref{primal} may
{\em underestimate} the actual objective value; however, as alluded to above, the
objective value of \eqref{primal} is within an additive $B$ of the actual objective
value. Let $\OPT_B$ denote the optimal value of \eqref{primal}, and $\iopt$ denote the
optimal value of the $\ell$-centrum problem. 

\begin{claim} \label{relax}
If $B\geq\iopt$, then $\OPT_B\leq\iopt\leq B$. 
\end{claim}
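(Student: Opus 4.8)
The plan is to prove the two inequalities $\OPT_B \le \iopt$ and $\iopt \le B$ essentially separately, since the second is immediate from the hypothesis. For the second inequality, note that $B \ge \iopt$ is precisely the assumption, so $\iopt \le B$ holds trivially; there is nothing to do there. The real content is the first inequality, $\OPT_B \le \iopt$, and here the natural approach is to exhibit a feasible solution to \eqref{primal} whose objective value is at most $\iopt$.

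For this, I would take an optimal integral solution to the $\ell$-centrum problem: a set $F^*$ of $k$ centers together with the induced assignment of each client $j$ to its nearest center $i^*(j)\in F^*$. Define the $0/1$ solution $(x,y)$ by setting $y_i=1$ iff $i\in F^*$ and $x_{ij}=1$ iff $i=i^*(j)$. This satisfies \eqref{casgn}, \eqref{openf}, and \eqref{kmed} by construction, so it is feasible for \eqref{primal}. It then remains to compare its objective value $\sum_j f_B(c_{i^*(j)j})$ to $\iopt = \obj(\ell;\{c_{i^*(j)j}\}_j)$. Writing $\acost^* = \{c_{i^*(j)j}\}_{j\in\D}$ and sorting it in non-increasing order, $\iopt = \sum_{j=1}^\ell (\acost^*)^{\down}_j$, i.e.\ the sum of the $\ell$ largest assignment costs. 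On the other hand, $\sum_j f_B(c_{i^*(j)j}) = \sum_{j : c_{i^*(j)j} > B/\ell} c_{i^*(j)j}$.

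The key observation that makes these comparable is that the number of clients $j$ with $c_{i^*(j)j} > B/\ell$ is at most $\ell$: if there were strictly more than $\ell$ such clients, then the $\ell$ largest assignment costs would each exceed $B/\ell$, forcing $\iopt > \ell\cdot(B/\ell) = B \ge \iopt$, a contradiction. Hence the set $S = \{j : c_{i^*(j)j} > B/\ell\}$ has $|S|\le\ell$, and every element of $S$ is among the $\ell$ clients of largest assignment cost (any client outside the top-$\ell$ has cost at most $(\acost^*)^{\down}_\ell$, and since $|S|\le\ell$ all of $S$ fits within the top $\ell$ positions). Therefore $\sum_{j\in S} c_{i^*(j)j} \le \sum_{j=1}^\ell (\acost^*)^{\down}_j = \iopt$, which gives $\OPT_B \le \sum_j f_B(c_{i^*(j)j}) \le \iopt$ and completes the proof.

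I do not anticipate a serious obstacle here; the only subtle point is making the counting argument airtight — specifically, justifying that all clients in $S$ occupy positions $1,\dots,\ell$ in the sorted order and hence their total cost is dominated by the sum of the top $\ell$ costs, even when some top-$\ell$ costs are themselves below $B/\ell$ (those simply contribute nonnegatively to $\iopt$ without being counted in $\OPT_B$). This is exactly the "atypical" underestimation phenomenon the authors flagged before the claim, so I would state it cleanly rather than belabor it.
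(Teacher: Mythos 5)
Your proof is correct and takes essentially the same approach as the paper's: exhibit the optimal integral solution as a feasible point of \eqref{primal}, observe that at most $\ell$ assignment costs can exceed $B/\ell$ (else the sum of the top $\ell$ would already exceed $B\geq\iopt$), and conclude that the \eqref{primal}-objective of this solution is a sum of at most $\ell$ assignment costs and hence at most $\iopt$. The paper states the same counting argument more tersely (via the threshold $\iopt/\ell\leq B/\ell$); your extra care about why all of $S$ sits in the top $\ell$ sorted positions is fine but not a different idea.
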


\begin{proof}
Let $(\tx,\ty)$ be the integer point corresponding to an optimal solution. It is clear
that $(\tx,\ty)$ is feasible to \eqref{primal}. Also, there can be at most $\ell$
assignment costs that are larger than $\iopt/\ell$, and hence at most $\ell$ assignment
costs are larger than $B/\ell$. Therefore, the objective value of $(\tx,\ty)$ is at most
$\iopt$.
\end{proof}

\begin{claim} \label{under}
Let $\acost$ be an assignment-cost vector (where $\acost_j$ is the assignment cost of
$j$). Then, $\obj(\ell;\acost)\leq\sum_{j}f_B(\acost_j)+B$.
\end{claim}

\begin{proof}
For any client $j$ for which $\acost_j$ is counted towards $\obj(\ell;\acost)$ but
$f_B(\acost_j)=0$, we have $\acost_j\leq B/\ell$; there can be at most $\ell$ such
clients, so the statement follows.
\end{proof}

The following claim shows that
the weighted distances $\{f_B(c_{ij})\}$ satisfy an approximate form of triangle
inequality. 

\begin{claim} \label{triangle}
For any $B\geq 0$, we have: 
(i) $f_B(x)\leq f_B(y)$ if $x\leq y$;
(ii) $\max\{f_B(x),f_B(y),f_{B}(z)\}\geq f_{B}\bigl(\frac{x+y+z}{3}\bigr)$ for any
$x,y,z\geq 0$; and
(iii) $3f_B(x/3)=f_{3B}(x)$ for any $x\geq 0$.
\end{claim}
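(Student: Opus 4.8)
The plan is to prove the three parts in order, each by a short case analysis that simply unwinds the definition of $f_B$; recall that $f_B(d)=d$ when $d>B/\ell$ and $f_B(d)=0$ otherwise, so $f_B$ is the identity function on $\R_{\geq 0}$ truncated to $0$ below the threshold $B/\ell$. No part requires anything beyond this observation, so I do not anticipate a genuine obstacle here — the only thing to be careful about is keeping the (strict) threshold inequality consistent across the three statements, and making sure part~(ii) is stated in exactly the weak form that is actually needed downstream.

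For part~(i), I would split on whether $y>B/\ell$. If not, then $x\leq y\leq B/\ell$, so $f_B(x)=f_B(y)=0$ and the inequality holds with equality. If $y>B/\ell$, then $f_B(y)=y$, while $f_B(x)$ is either $0$ or $x$; in both cases $f_B(x)\leq x\leq y=f_B(y)$, as needed.

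For part~(ii), the key point is just that $\frac{x+y+z}{3}\leq\max\{x,y,z\}$. Applying the monotonicity proved in part~(i) gives $f_B\bigl(\tfrac{x+y+z}{3}\bigr)\leq f_B\bigl(\max\{x,y,z\}\bigr)$, and monotonicity again shows $f_B\bigl(\max\{x,y,z\}\bigr)=\max\{f_B(x),f_B(y),f_B(z)\}$, which is the claimed inequality. This ``approximate triangle inequality'' is deliberately weak; I expect it to be used later only in combination with the genuine triangle inequality satisfied by the $c_{ij}$'s, which is why no stronger statement is attempted.

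For part~(iii), I would observe that the truncation threshold scales with the argument: $x/3>B/\ell$ holds if and only if $x>3B/\ell$. Hence $3f_B(x/3)$ equals $3\cdot(x/3)=x$ exactly when $x>3B/\ell$, and equals $0$ otherwise; but $f_{3B}(x)$ is, by definition, $x$ when $x>(3B)/\ell$ and $0$ otherwise, and $(3B)/\ell=3B/\ell$, so the two expressions agree for every $x\geq 0$. More generally the same argument shows $t\,f_B(x/t)=f_{tB}(x)$ for any $t>0$ — the truncation is homogeneous under simultaneously scaling the argument and the threshold — and part~(iii) is the case $t=3$.
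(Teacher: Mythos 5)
Your proof is correct; all three parts follow by exactly the kind of direct case analysis on the threshold $B/\ell$ that you give, and the paper itself omits the proof as immediate from the definition of $f_B$. Nothing is missing, and your observation that (iii) generalizes to $t\,f_B(x/t)=f_{tB}(x)$ is consistent with how the scaled version $f_{3\bB}$ is used later in the analysis.
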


Using binary search, we can find, within a $(1+\e)$-factor, the smallest $B$ such that
$\OPT_B\leq B$. Let $\bB$ be this $B$. (Alternatively, we may enumerate all possible
choices for $\iopt$ in powers of $(1+\e)$, and return the best solution among the
solutions found for each $B$.)
By Claim~\ref{relax}, we have that $\bB\leq(1+\e)\iopt$. 

While \eqref{primal} closely resembles the LP-relaxation for $k$-median, notice that the
assignment costs $\{f_B(c_{ij})\}$ 
used in the objective of \eqref{primal} 
{\em do not form a metric}.
Despite this complication, we show that \prim{\bB} can be leveraged to obtain a solution
of (actual, $\obj(\ell;.)$-) cost $O(\bB)$. We describe two ways of obtaining such a
guarantee, both of which are obtained via simple procedures and a clean 
analysis. 
The first is a primal-dual based algorithm based on the Jain-Vazirani (JV)
template~\cite{JainV01}. We Lagrangify \eqref{kmed} and move to the facility-location
version where we may choose any number of centers but incur a fixed cost of (say) $\ld$ 
for each center we choose. By fine-tuning $\ld$, we can find two solutions, one opening
less than $k$ centers, and the other opening more than $k$ centers; rounding a 
convex combination of these solutions yields the final solution. 
This yields a $12$-approximation algorithm.
The second algorithm is based on LP-rounding, and shows that any $\al$-approximation
algorithm for $k$-median whose guarantee is with respect to the natural LP for $k$-median,
can be used to obtain a solution of cost at most $2(\al+1)\bB$.

\begin{theorem} \label{unitapx}
We can obtain a solution to the $\ell$-centrum problem of cost at most 
$\bigl(12+O(\e)\bigr)\cdot\bB\leq\bigl(12+O(\e)\bigr)\iopt$. 
\end{theorem}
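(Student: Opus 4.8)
The plan is to Lagrangify the cardinality constraint \eqref{kmed} in \prim{\bB}, run a Jain-Vazirani-style primal-dual algorithm on the resulting facility-location instance with distances $\{f_{\bB}(c_{ij})\}$, and combine two near-optimal solutions — one opening fewer than $k$ centers, one opening more — into a final solution with at most $k$ centers. First I would set up the Lagrangian: for a multiplier $\ld\geq 0$, consider the problem of opening any set $F$ of facilities, each at cost $\ld$, and assigning clients to minimize $\sum_j f_{\bB}(c_{i(j)j}) + \ld|F|$. Since $\OPT_{\bB}\leq\bB$ with exactly $k$ centers, for every $\ld$ the optimum of this Lagrangian relaxation is at most $\bB + \ld k$. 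I would then run the standard JV primal-dual algorithm: grow dual variables $\al_j$ uniformly, open a facility $i$ (tentatively) once $\sum_j \max(0,\al_j - f_{\bB}(c_{ij})) = \ld$, freeze clients that have a tight edge to an open facility, then prune the open facilities to a maximal independent set in the ``conflict graph'' and assign each client to a nearby open facility. The key point is that JV's analysis only uses the triangle inequality on the distances, and Claim~\ref{triangle}(ii) gives us an \emph{approximate} triangle inequality for the $f_{\bB}$-distances — specifically a factor-$3$ loss — so a client whose dual witnesses a pruned facility gets assigned at $f_{\bB}$-cost at most $3\al_j$ rather than $3\al_j$ in the exact-metric case (the constant is the same but I must recheck that the three-point version suffices where JV normally uses a two-hop path). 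This yields: for the solution $F_\ld$ produced at multiplier $\ld$, $\sum_j f_{\bB}(c_{i(j)j}) + \ld|F_\ld| \leq 3\sum_j \al_j \leq 3(\bB + \ld k)$, hence $\sum_j f_{\bB}(c_{i(j)j}) \leq 3\bB + \ld(3k - |F_\ld|)$.

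Next I would do the bisection on $\ld$. At $\ld = 0$ the algorithm opens (essentially) all facilities, so $|F_0|\geq k$; for $\ld$ large, $|F_\ld| = 1 < k$ (assuming $k\geq 2$; the case $k=1$ is trivial). By binary search I find $\ld_1 < \ld_2$ with $\ld_2 - \ld_1$ tiny, $|F_{\ld_1}| = k_1 \geq k$, $|F_{\ld_2}| = k_2 < k$; write $k = a k_1 + b k_2$ with $a,b\geq 0$, $a+b=1$. The solution $F_{\ld_2}$ already has at most $k$ centers; combining the bound above at $\ld_1$ and $\ld_2$ and taking the convex combination with weights $a,b$, I get $a\cdot\sum f_{\bB}(c_{\cdot}) + b\cdot\sum f_{\bB}(c_{\cdot}) \leq 3\bB + (\text{negligible})$ — more precisely, $a\sum_{j}f_{\bB}(\text{cost in }F_{\ld_1}) + b\sum_j f_{\bB}(\text{cost in }F_{\ld_2}) \leq 3\bB + O(\e)\bB$ after choosing the bisection fine enough (standard argument, absorbing $\ld_2(3k-k_2)$ type terms). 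Then I would use the JV convex-combination rounding trick: open all of $F_{\ld_2}$, and from $F_{\ld_1}$ select roughly $\tfrac{b}{a}(k - k_2)$ facilities — picking, for each facility in $F_{\ld_2}$, the nearest facility of $F_{\ld_1}$ and then sampling the rest — so that the total is at most $k$ and each client's $f_{\bB}$-assignment cost blows up by at most a constant (the standard factor here is $3$ or $4$ relative to $\max$ of the two costs, via one more application of approximate triangle inequality from Claim~\ref{triangle}(ii)–(iii)). Working out constants, this should give a solution with $\sum_j f_{\bB}(c_{i(j)j}) \leq (11 + O(\e))\bB$.

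Finally I would pass from the $f_{\bB}$-objective back to the true $\obj(\ell;\cdot)$ objective using Claim~\ref{under}: the actual ordered-median cost of the returned solution is at most $\sum_j f_{\bB}(c_{i(j)j}) + \bB \leq (12 + O(\e))\bB$, and since $\bB \leq (1+\e)\iopt$ by Claim~\ref{relax} and the binary-search remark, this is $(12 + O(\e))\iopt$, as claimed. I would also handle the bookkeeping that the binary search over $B$ (to find $\bB$) and the binary search over $\ld$ each contribute only $(1+\e)$ factors and polynomially many iterations, so the overall running time is polynomial.

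\medskip\noindent\textbf{Main obstacle.} The delicate point is the JV analysis under the \emph{approximate} triangle inequality of Claim~\ref{triangle}(ii). The classical JV bound ``a client pays at most $3\al_j$ to a surviving facility'' is derived via a specific short path (client $\to$ own witness $\to$ conflicting client $\to$ its witness), and I must verify that each hop can be collapsed using the three-point inequality $\max\{f_{\bB}(x),f_{\bB}(y),f_{\bB}(z)\}\geq f_{\bB}\!\left(\tfrac{x+y+z}{3}\right)$ (equivalently the scaled form in (iii)) without the constant ballooning — in particular that combining hops across the Lagrangian-rounding step and the convex-combination step still yields an overall constant of $12$ rather than something larger. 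Getting these constants to line up to exactly $12$, while only paying $O(\e)$ in the two binary searches, is the crux of the argument; everything else is a faithful adaptation of the Jain–Vazirani framework.
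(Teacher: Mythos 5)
Your overall strategy is the same as the paper's: Lagrangify \eqref{kmed}, run JV dual ascent with the $\{f_{\bB}(c_{ij})\}$ proxy costs, prune, binary-search on $\ld$ to get a bipoint solution, round it, and convert back via Claim~\ref{under}. However, there is a concrete quantitative gap exactly at the point you flag as the main obstacle, and your proposed accounting ($11\bB$ from the proxy objective plus $1\bB$ from Claim~\ref{under}$\,=12\bB$) cannot be made to work. The approximate triangle inequality of Claim~\ref{triangle} does \emph{not} bound $f_{\bB}(x+y+z)$ by $3\max\{f_{\bB}(x),f_{\bB}(y),f_{\bB}(z)\}$ — take $x=y=z=\bB/\ell$, where the right side is $0$ and the left side is $3\bB/\ell$ — it only bounds $f_{3\bB}(x+y+z)$. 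So a client rerouted to a surviving facility after pruning incurs $f_{3\bB}$-cost (not $f_{\bB}$-cost) at most $3\al_j$; this is why the paper's Theorem~\ref{lmp} mixes $f_{\bB}$ and $f_{3\bB}$ terms. Consequently, when you invoke Claim~\ref{under} at the end you must apply it at threshold $3\bB$ (or higher), and the additive term is at least $3\bB$, not $\bB$. After the bipoint rounding, clients rerouted through the three-hop path client$\to i_2(k)\to$ cluster center $j\to i_1(j)$ or $i_2(j)$ pick up an additional $3\bB/\ell$ of slack per hop, so the additive term grows further.

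Tracking this correctly, the natural version of your argument yields $2(aC_1+bC_2)+9\bB\leq 6\OPT+(9+O(\e))\bB\leq(15+O(\e))\bB$, where $C_1,C_2$ are sums of $f_{3\bB}$-costs — i.e., a $15$, not a $12$. The paper reaches $12$ only via an extra refinement (Remark~\ref{improv}): when the cluster center $j$ has $\ttht d_{1,j}+(1-\ttht)d_{2,j}=0$ one opens a facility at $j$ itself, which shortens the reroute path and cuts the additive loss from $9\bB$ to $6\bB$, giving $6\bB+6\bB=(12+O(\e))\bB$. Two further details you should not gloss over: (1) the factor $2$ (rather than $3$) relating the rounding LP \eqref{roundlp} to $aC_1+bC_2$ requires clustering clients in increasing order of $d_{1,j}+d_{2,j}$ \emph{and} a case split on whether $a\geq 1/2$ (if $b\geq 1/2$ one just outputs $T_2$); (2) the matching must pair each $T_2$-facility with a $T_1$-facility through a common client so that the charging argument $d_{1,j}+d_{2,j}\leq d_{1,k}+d_{2,k}$ applies — ``nearest facility'' pairing does not obviously give this. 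With these corrections your outline does become the paper's proof.
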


\begin{theorem} \label{unitlp}
Let \kmedlp denote the $k$-median LP: 
$\min\ \bigl\{\sum_{j,i}c_{ij}x_{ij}: \ \text{\eqref{casgn}--\eqref{kmed}}\bigr\}$.
Let $\A$ be an $\al$-approximation algorithm for $k$-median whose approximation guarantee
is proved relative to \kmedlp. We can obtain a solution to the $\ell$-centrum problem of cost
at most $2(\al+1)\bB$. Thus, taking $\A$ to be the $3.25$-approximation algorithm
in~\cite{CharikarL12}, we obtain an $(8.5+\e)$-approximation algorithm for the
$\ell$-centrum problem.
\end{theorem}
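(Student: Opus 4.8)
The plan is to reduce the $\ell$-centrum problem to a single instance of standard (metric) $k$-median, so that the \kmedlp-relative guarantee of $\A$ can be invoked as a black box. Fix an optimal solution $(x^*,y^*)$ of \prim{\bB}; by truncating $x^*$ wherever $\sum_i x^*_{ij}>1$ (this only lowers the objective, since the coefficients $f_{\bB}(\cdot)$ are nonnegative, and keeps \eqref{casgn}--\eqref{kmed} satisfied), we may assume $\sum_i x^*_{ij}=1$ for every client $j$. Write $V_j:=\sum_i f_{\bB}(c_{ij})x^*_{ij}$ for the fractional truncated assignment cost of $j$, so $\sum_j V_j=\OPT_{\bB}\le\bB$. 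Call a client \emph{big} if $V_j\ge\bB/\ell$, and let $S$ be the set of big clients; since $\sum_{j\in S}V_j\le\OPT_{\bB}\le\bB$, we get $|S|\le\ell$.

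Next I would form a $k$-median instance on the point set $\D$ with the original metric $\{c_{ij}\}$ and budget $k$, but with the client set restricted to $S$. (One may alternatively use the metric $\tilde c$ given by $\tilde c_{ij}=\max\{c_{ij},\bB/\ell\}$ for $i\ne j$ and $\tilde c_{ii}=0$, which is easily checked to be a metric and satisfies $f_{\bB}(c_{ij})\le\tilde c_{ij}\le f_{\bB}(c_{ij})+\bB/\ell$; either choice works.) The restriction of $(x^*,y^*)$ to the clients of $S$ is feasible for \kmedlp on this instance, of cost
$$
\sum_{j\in S}\sum_i c_{ij}x^*_{ij}=\sum_{j\in S}\Big(V_j+\!\!\!\sum_{i:\,c_{ij}\le\bB/\ell}\!\!\!c_{ij}x^*_{ij}\Big)\le\sum_{j\in S}\big(V_j+\tfrac{\bB}{\ell}\big)\le\OPT_{\bB}+|S|\cdot\tfrac{\bB}{\ell}\le 2\bB .
$$
Running $\A$ on this instance therefore returns a set $\hat F$ of $k$ centers with $\sum_{j\in S}c(\hat F,j)\le 2\al\bB$, where $c(F,j):=\min_{i\in F}c_{ij}$. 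We open $\hat F$ and assign \emph{every} client of $\D$ to its nearest center in $\hat F$, obtaining the assignment-cost vector $\hat\acost=\big(c(\hat F,j)\big)_j$; in particular $\sum_{j\in S}f_{\bB}(\hat\acost_j)\le\sum_{j\in S}\hat\acost_j\le 2\al\bB$.

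It remains to bound $\obj(\ell;\hat\acost)$. The $\ell$ largest coordinates of $\hat\acost$ split into coordinates of clients in $S$ — of total value at most $2\al\bB$ — and at most $\ell$ coordinates of clients outside $S$; among the latter, a coordinate of value at most $\bB/\ell$ contributes at most $\bB/\ell$. Hence, exactly as in the bookkeeping of Claim~\ref{under} but applied only to the clients outside $S$,
$$
\obj(\ell;\hat\acost)\le 2\al\bB+\bB+\sum_{j\notin S}f_{\bB}(\hat\acost_j).
$$
So it suffices to prove $\sum_{j\notin S}f_{\bB}(\hat\acost_j)\le\bB$, i.e., that the clients outside $S$ that end up beyond distance $\bB/\ell$ from $\hat F$ contribute at most $\bB$ in total. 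The final bound is then $2(\al+1)\bB$; since $\bB\le(1+\e)\iopt$ and we may take $\A$ to be the $3.25$-approximation of~\cite{CharikarL12}, this gives the claimed $(8.5+\e)$-approximation, the $\e$ arising only from the binary search for $\bB$.

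Establishing $\sum_{j\notin S}f_{\bB}(\hat\acost_j)\le\bB$ is the step I expect to be the main obstacle, since $\hat F$ was computed with no regard for the clients outside $S$. The structure I would lean on: for $j\notin S$, the bound $V_j<\bB/\ell$ forces $\sum_{i:\,c_{ij}>\bB/\ell}x^*_{ij}\le V_j\ell/\bB<1$, so $x^*$ serves $j$ mostly within distance $\bB/\ell$ of itself, and the aggregate ``far mass'' $\sum_{j\notin S}\sum_{i:\,c_{ij}>\bB/\ell}x^*_{ij}$ is at most $\ell$. I would combine this with the approximate triangle inequality of Claim~\ref{triangle} in a charging argument that routes each far-ending client $j\notin S$ through one of its nearby fractionally-open facilities and on to a center already serving a big client, so that the total extra cost telescopes to $O(\bB)$; getting the constant down to exactly $1$ (so the final factor is precisely $2(\al+1)$) will take care in how the clients outside $S$ are grouped before charging. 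A cleaner alternative I would also try is to augment the $k$-median instance with a bounded set of ``representative'' clients drawn from outside $S$ — in the spirit of the enumeration of Aouad and Segev~\cite{AouadS} — so that $\hat F$ is automatically close to all clients outside $S$.
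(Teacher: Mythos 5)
You correctly identify the single place where the argument breaks --- bounding $\sum_{j\notin S}f_{\bB}(\hat\acost_j)$ --- and indeed it \emph{does} break. Restricting the $k$-median instance to the set $S$ of ``big'' clients (those with large fractional LP-cost $V_j$) gives $\A$ no incentive to place centers near the other clients, and the fractional solution $x^*$ itself may serve the clients outside $S$ from facilities that are arbitrarily far from every client in $S$. Concretely, one can place $n-\ell$ unit-cost clients in a tight cluster at distance $D\gg\bB$ from everything in $S$, have $x^*$ serve them from a local facility, and still have $V_j<\bB/\ell$ for all of them; running $\A$ on $S$ alone can then return $\hat F$ at distance $D$ from all of them, and $\sum_{j\notin S}f_{\bB}(\hat\acost_j)$ is unbounded. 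So the missing step is not a bookkeeping gap but a genuine obstacle, and the telescoping/charging sketch via Claim~\ref{triangle} cannot rescue it because there is simply no constraint tying $\hat F$ to the small clients. The calculations you do carry out (truncating $x^*$, $|S|\le\ell$, the restricted LP cost $\le 2\bB$, the $2\al\bB$ guarantee on $S$) are all correct, but they stop exactly at the point where the theorem is decided.

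The paper takes a different cut at the same reduction, and the difference is precisely what closes your gap. Rather than classifying clients by fractional LP-cost, it builds a \emph{geometric net}: scan clients in increasing order of $\lp_j:=\sum_i f_{\bB}(c_{ij})x^*_{ij}$, and let $D'$ be a maximal set of clients that are pairwise more than $2\bB/\ell$ apart; every other client $k$ is assigned to the first net point $\sg(k)\in D'$ within distance $2\bB/\ell$, and the demand $d'_{\sg(k)}$ is incremented. The $k$-median instance $\I'$ has client set $D'$ with demands $d'$, and this is the crucial change: \emph{every} client of $\D$, big or small, is now within $2\bB/\ell$ of some demand-carrying point of $\I'$, so any $k$-median solution for $\I'$ automatically controls the whole instance. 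The three steps then combine cleanly: (a) moving each demand to a cheaper-or-equal net point gives $\OPT':=\sum_{j\in D'}d'_j\lp_j\le\OPT_{\bB}$; (b) pairwise separation $>2\bB/\ell$ forces $f_{\bB}(c_{ij})=c_{ij}$ whenever $i$ lies in the Voronoi cell of a different net point $j'\ne j$, yielding a feasible \kmedlp-solution for $\I'$ of cost $\le 2\OPT'$; (c) any integral $k$-median solution of cost $C$ on $\I'$ gives $\ell$-centrum cost at most $C+\ell\cdot(2\bB/\ell)=C+2\bB$ on the original instance. Taking $C\le 2\al\OPT'\le 2\al\bB$ gives $2(\al+1)\bB$. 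Your closing suggestion of adding ``representative'' clients from outside $S$ is exactly the right instinct --- the paper's net is that set of representatives, chosen so that representation is by distance rather than by LP-cost, which is what makes step (c) go through.
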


In our algorithms and analysis, we have chosen to keep the exposition simple and not
sought to overly optimize the constants.
Although Theorem~\ref{unitapx} yields a worse approximation guarantee, 
the underlying primal-dual algorithm and analysis are quite versatile and 
extend fairly easily to the setting with general weights. 
The remainder of this section is devoted to proving Theorem~\ref{unitapx}.
We defer the proof of Theorem~\ref{unitlp} to Appendix~\ref{append-unitlp}. 

\subsection{Proof of Theorem~\ref{unitapx}}
As noted earlier, the proof relies on the primal-dual method. 
The dual of \prim{\bB} is as follows.  
\begin{alignat}{3}
\max & \quad & \sum_{j} \alpha_j & - k\cdot\ld \tag{D$_{\bB}$} \label{dual} \\
\text{s.t.} && \alpha_j & \leq f_{\bB}(c_{ij})+\beta_{ij} \qquad && 
\forall i,j \label{ftight} \\
&& \sum_j\beta_{ij} & \leq \ld && \forall i \label{fpay} \\
&& \al,\ld & \geq 0 \notag.
\end{alignat}
Let $\OPT:=\OPT_{\bB}$ denote the optimal value of \prim{\bB}.
We first fix $\ld$ and construct a solution that may open more than $k$ centers but will 
have some near-optimality properties (see Theorem~\ref{lmp})  as follows.

\begin{enumerate}[label=P\arabic*., topsep=0.5ex, itemsep=0ex, labelwidth=\widthof{P2.}, leftmargin=!]
\item \textbf{Dual-ascent.}\  
Initialize $\D'=\D$, $\al_j=\beta_{ij}=0$ for all $i,j\in\D$, $F=\es$. 
The clients in $\D'$ are called {\em active clients}. If $\al_j\geq f_{\bB}(c_{ij})$, we
say that $j$ reaches $i$. 
(So if $c_{ij}\leq\bB/\ell$, then $j$ reaches $i$ from the very beginning.) 

We repeat the following until all clients become inactive.
Uniformly raise the $\alpha_j$s of all active clients, and the $\beta_{ij}$s for
$(i,j)$ such that $i\notin F$, $j$ is active, and can reach $i$ 
until one of the following events happen.
\begin{enumerate}[label=$\bullet$, nosep]
\item Some client $j\in\D$ reaches some $i$ (and previously could not reach $i$): if 
$i\in F$, we {\em freeze} $j$, and remove $j$ from $\D'$. 
\item Constraint \eqref{fpay} becomes tight for some $i\notin F$: we add $i$ to $F$;
for every $j\in\D'$ that can reach $i$, we freeze $j$ and remove $j$ from $\D'$.
\end{enumerate} 

\item \textbf{Pruning.}\  
Pick a maximal subset $T$ of $F$ with the following property: for every $j\in\D$, there is
at most one $i\in T$ such that $\beta_{ij}>0$.

\item Return $T$ as the set of centers, and assign every $j$ to the nearest point in
$T$, which we denote by $i(j)$.
\end{enumerate}

Let $S=\{j: \exists i\in T\ \text{s.t.}\ \beta_{ij}>0\}$. 

\begin{theorem} \label{lmp}
The solution computed above satisfies 
$3\ld|T|+\sum_{j\in S}f_{\bB}(c_{i(j)j})+\sum_{j\notin S}f_{3\bB}(c_{i(j)j})\leq 3\sum_j\al_j$. 
\end{theorem}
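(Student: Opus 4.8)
The plan is to follow the Jain--Vazirani primal-dual template, but with the metric $\{c_{ij}\}$ replaced by the non-metric weights $\{f_{\bB}(c_{ij})\}$, paying for the loss of the triangle inequality with the approximate-triangle-inequality properties in Claim~\ref{triangle}. First I would record the invariants maintained by the dual-ascent phase: throughout, $(\al,\beta)$ stays feasible for \eqref{dual}; whenever $\beta_{ij}>0$ we have $\al_j=f_{\bB}(c_{ij})+\beta_{ij}$ (since $\beta_{ij}$ is raised exactly so as to keep \eqref{ftight} tight once $j$ reaches $i$, and both $\al_j$ and $\beta_{ij}$ stop growing once $j$ is frozen); every $i\in F$ satisfies $\sum_j\beta_{ij}=\ld$ (it entered $F$ at the instant \eqref{fpay} became tight, and its $\beta_{ij}$'s are never raised afterward); and every client $j$ is frozen with a witness facility $i_j\in F$ such that $j$ reaches $i_j$, i.e. $\al_j\geq f_{\bB}(c_{i_j j})$, where moreover $i_j$ was added to $F$ no later than time $\al_j$.

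Next I would handle the clients in $S$. Each such $j$ has a unique $\iota(j)\in T$ with $\beta_{\iota(j)j}>0$, so the pruning property gives $\ld|T|=\sum_{i\in T}\sum_j\beta_{ij}=\sum_{j\in S}\beta_{\iota(j)j}$. Since $i(j)$ is the closest point of $T$ to $j$ and $\iota(j)\in T$, we have $c_{i(j)j}\leq c_{\iota(j)j}$, hence $f_{\bB}(c_{i(j)j})\leq f_{\bB}(c_{\iota(j)j})$ by Claim~\ref{triangle}(i), so $f_{\bB}(c_{i(j)j})+\beta_{\iota(j)j}\leq f_{\bB}(c_{\iota(j)j})+\beta_{\iota(j)j}=\al_j$. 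Summing over $j\in S$ and multiplying by $3$ yields $3\ld|T|+3\sum_{j\in S}f_{\bB}(c_{i(j)j})\leq 3\sum_{j\in S}\al_j$; discarding two of the three copies of the nonnegative middle term gives $3\ld|T|+\sum_{j\in S}f_{\bB}(c_{i(j)j})\leq 3\sum_{j\in S}\al_j$.

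The main work is the bound $f_{3\bB}(c_{i(j)j})\leq 3\al_j$ for $j\notin S$. If the witness $i_j\in T$, this is immediate: $c_{i(j)j}\leq c_{i_j j}$, so by Claim~\ref{triangle}(iii) and (i), $f_{3\bB}(c_{i(j)j})=3f_{\bB}(c_{i(j)j}/3)\leq 3f_{\bB}(c_{i_j j})\leq 3\al_j$. Otherwise $i_j\in F\sm T$, and maximality of $T$ furnishes a client $j'$ and a facility $i''\in T$ with $\beta_{i_j j'}>0$ and $\beta_{i'' j'}>0$; in particular $j'$ reaches both $i_j$ and $i''$, so $f_{\bB}(c_{i_j j'}),\ f_{\bB}(c_{i'' j'})\leq\al_{j'}$. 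A timing argument shows $\al_{j'}\leq\al_j$: $\beta_{i_j j'}$ is raised only while $i_j\notin F$, hence strictly before $i_j$ joined $F$, which occurred at time $\leq\al_j$; by that moment $j'$ already reached $i_j$, so $j'$ is frozen by then, i.e. $\al_{j'}\leq\al_j$. Using the genuine triangle inequality for $\{c_{ij}\}$ along the path $j\!-\!i_j\!-\!j'\!-\!i''$ together with $i''\in T$,
$$c_{i(j)j}\leq c_{i'' j}\leq c_{i_j j}+c_{i_j j'}+c_{i'' j'},$$
and then by Claim~\ref{triangle}(iii), (i), (ii) in turn,
$$f_{3\bB}(c_{i(j)j})=3f_{\bB}\!\left(\tfrac{c_{i(j)j}}{3}\right)\leq 3f_{\bB}\!\left(\tfrac{c_{i_j j}+c_{i_j j'}+c_{i'' j'}}{3}\right)\leq 3\max\{f_{\bB}(c_{i_j j}),f_{\bB}(c_{i_j j'}),f_{\bB}(c_{i'' j'})\}\leq 3\al_j,$$
the final step because $f_{\bB}(c_{i_j j})\leq\al_j$ (witness) and the other two terms are at most $\al_{j'}\leq\al_j$. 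Summing over $j\notin S$ and adding the inequality from the previous paragraph gives exactly the claimed $3\ld|T|+\sum_{j\in S}f_{\bB}(c_{i(j)j})+\sum_{j\notin S}f_{3\bB}(c_{i(j)j})\leq 3\sum_j\al_j$. The main obstacle is the case $j\notin S$, $i_j\notin T$: one must both extract the witness structure $(j',i'')$ from the pruning step (and the accompanying timing fact $\al_{j'}\leq\al_j$) and correctly invoke the scaled/averaged triangle inequalities to absorb the three-hop path --- this is precisely where the constant $3$, and the appearance of $f_{3\bB}$ rather than $f_{\bB}$ on clients outside $S$, originate.
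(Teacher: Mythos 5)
Your proposal is correct and follows essentially the same approach as the paper's proof: the same Jain--Vazirani-style primal-dual accounting, the same use of a witness facility $i_j$, the same maximality-of-$T$ argument producing a bridging client $j'$ and facility $i''$, the same timing observation $\al_{j'}\leq\al_j$, and the same use of Claim~\ref{triangle} to absorb the three-hop path $j$--$i_j$--$j'$--$i''$ into $f_{3\bB}$. Your write-up is somewhat more careful than the paper's (explicitly distinguishing $i(j)$ from the facility $\iota(j)$ with $\beta_{\iota(j)j}>0$, and spelling out the dual-ascent invariants), but the structure and all key ideas coincide.
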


\begin{proof}
The proof resembles the analysis of the JV primal-dual algorithm for facility location,
but the subtlety is that we need to deal with the complication that the
$\{f_{\bB}(c_{ij})\}_{i,j\in\D}$ ``distances'' do not form a metric.

Observe that for every $i\in T$, every client $j\in S$ for which $\beta_{ij}>0$ satisfies 
$i(j)=i$. So we have
$$
\sum_{j\in S}3\al_j\geq\sum_{j\in S}\Bigl(3\beta_{i(j)j}+f_{\bB}(c_{i(j)j})\Bigr)
=3\ld|T|+\sum_{j\in S}f_{\bB}(c_{i(j)j}).
$$
We show that for each client $j\notin S$, there is some $i''\in T$ such that
$f_{3\bB}(c_{i''j})\leq 3\al_j$, which will complete the proof. 
Let $i\in F$ be the facility that caused $j$ to freeze, so
$f_{\bB}(c_{ij})\leq\al_j$. If $i\in T$, then we are done.
Otherwise, since $T$ is maximal, there is some $i'\in T$ and some client $k\in S$ such
that $\beta_{i'k},\beta_{ik}>0$. 
Notice that $\al_j\geq\al_k$,
since $\al_j$ grows at least until the time point when $i$ joins $F$, and $\al_k$ grows
until at most this time point. 
Therefore, $f_{\bB}(c_{ik}), f_{\bB}(c_{i'k})\leq\al_k\leq\al_j$.
So by Claim~\ref{triangle}, we have 
$f_{3\bB}(c_{i'j})\leq f_{3\bB}(c_{i'k}+c_{ik}+c_{ij})\leq 3\al_j$. 
\end{proof}

At $\ld=0$, the above algorithm will open a center at every point in $\D$, so open more
than $k$ centers. Using standard arguments, by performing binary search on $\ld$, we can
achieve one of the following two outcomes. 
\begin{enumerate}[(a), nosep]
\item Obtain some $\ld$ such that the above algorithm returns a solution $T$ with $|T|=k$:
in this case, Theorem~\ref{lmp} implies that 
$\sum_j f_{3\bB}(c_{i(j)j})\leq 3\OPT$, and 
Claim~\ref{under} then implies that the $\obj(\ell;.)$-cost of our solution is at most 
$3\OPT+3\bB\leq 6\bB$.

\item Obtain $\ld_1<\ld_2$ with $\ld_2-\ld_1<\frac{\e\bB}{n}$ such that letting $T_1$ and 
$T_2$ be the solutions returned for $\ld_1$ and $\ld_2$, we have
$k_1:=|T_1|>k>k_2:=|T_2|$. We describe below the procedure for extracting a low-cost
feasible solution from $T_1$ and $T_2$, and analyze it, which will complete the proof of  
Theorem~\ref{unitapx}. 
\end{enumerate}

\paragraph{Extracting a feasible solution from $T_1$ and $T_2$ in outcome (b).} 
Let $a,b\geq 0$ be such that $ak_1+bk_2=k$, $a+b=1$.
Thus, a convex combination of $T_1$ and $T_2$ yields a feasible fractional solution that
is sometimes called a {\em bipoint solution}, and our task is to round this into a
feasible solution.  
Let $(\al_1,\beta_1)$, $(\al_2,\beta_2)$ denote the dual solutions obtained for $\ld_1$
and $\ld_2$ respectively.
Let $i_1(j)$ and $i_2(j)$ denote the centers to which $j$ is assigned in $T_1$ and $T_2$
respectively. Let $d_{1,j}=f_{3\bB}(c_{i_1(j)j})$ and $d_{2,j}=f_{3\bB}(c_{i_2(j)j})$. 
Let $C_1:=\sum_j d_{1,j}$ and $C_2:=\sum_j d_{2,j}$.
Then, 
\begin{equation*}
\begin{split}
aC_1+bC_2 
& \leq 3a\Bigl(\sum_j\al_{1,j}-k_1\ld_1\Bigr)+3b\Bigl(\sum_j\al_{2,j}-k_2\ld_2\Bigr) \\
& \leq 3a\Bigl(\sum_j\al_{1,j}-k\ld_2\Bigr)+3b\Bigl(\sum_j\al_{2,j}-k\ld_2\Bigr)+3ak_1(\ld_2-\ld_1)
\leq 3\OPT+3\e\bB
\end{split}
\end{equation*}
where the last inequality follows since $(\al_1,\beta_1,\ld_2)$, $(\al_2,\beta_2,\ld_2)$ 
are feasible solutions to \eqref{dual}.
If $b\geq 0.5$, then $T_2$ yields a feasible solution of $\obj(\ell;.)$-cost at most
$C_2+3\bB\leq 6\OPT+(3+\e)\bB$. So suppose $a\geq 0.5$. 
The procedure for rounding the bipoint solution is as follows.

\begin{enumerate}[label=B\arabic*., topsep=0.5ex, itemsep=0ex, labelwidth=\widthof{B2.},
    leftmargin=!] 
\item {\bf Clustering.}\ 
We first match facilities in $T_2$ with a subset of facilities in $T_1$ as follows.
Initialize $\D'\assign\D$, $A\assign\es$, and $M\assign\es$. We repeatedly pick the client 
$j\in\D'$ with minimum $d_{1,j}+d_{2,j}$ value, and add $j$ to $A$. 
We add the tuple $(i_1(j),i_2(j))$ to $M$, remove from $\D'$ all clients $k$ (including
$j$) such that  $i_1(k)=i_1(j)$ or $i_2(k)=i_2(j)$, and set $\sg(k)=j$ for all such
clients. Let $M_1=M$ denote the matching so far.
Next, for each unmatched $i\in T_2$, we pick an arbitrary unmatched facility $i'\in T_1$,
and add $(i',i)$ to $M$. 
Let $F$ be the set of $T_1$-facilities that are matched, and 
$S:=\{j\in\D: i_1(j)\in F\}$. Note that $|F|=|M|=k_2$.

\item {\bf Opening facilities.}\ 
We will open either all facilities in $F$, or all facilities in $T_2$
(which are always matched).
Additionally, we will open $k-k_2$ facilities from $T_1\sm F$.
We formulate the following LP to determine how to do this. Variable $\tht$ indicates
if we open the facilities in $F$, and
variables $z_i$ for every $i\in T_1\sm F$ indicate if we open facility $i$.
\begin{alignat}{2}
\min & \quad & \sum_{j\in S}\bigl(\tht d_{1,j} & +(1-\tht)d_{2,j}\bigr)
+\sum_{k\notin S}\bigl(z_{i_1(k)}d_{1,k}+(1-z_{i_1(k)})(d_{2,k}+d_{1,\sg(k)}+d_{2,\sg(k)})\bigr) 
\tag{R-P} \label{roundlp} \\
\text{s.t.} && \sum_{i\in T_1\sm F}z_i & \leq k-k_2, \qquad 
\tht\in[0,1], \quad z_i\in[0,1] \ \ \forall i\in T_1\sm F.
\end{alignat}
The above LP is integral, and opening the facilities specified by an integral optimal
solution (as discussed above)
yields a solution of $\obj(\ell;.)$-cost at most $15\bB$. In Remark~\ref{improv}, we show
that a slight modification yields an improved
$\obj(\ell;.)$-cost of at most $12\bB$. 
\end{enumerate}

\noindent {\it Analysis.}\ 
It suffices to show that \eqref{roundlp} has a fractional solution of
small objective value, and that any integral solution yields a feasible solution to our
problem whose $\obj(\ell;.)$-cost is comparable to the objective value of
\eqref{roundlp}. 

For the former, we argue that setting $\tht=a$, $z_i=a$ for all $i\in T_1\sm F$ yields a
feasible solution of objective value at most $2(aC_1+bC_2)$. 
We have $\sum_{i\in T_1\sm F}z_i=a(k_1-k_2)=k-k_2$. 
Every $j\in S$ contributes
$ad_{1,j}+bd_{2,j}$ to the objective value  of \eqref{roundlp}, which is also its
contribution to $aC_1+bC_2$. Consider $k\notin S$ with $\sg(k)=j$, so 
$d_{1,j}+d_{2,j}\leq d_{1,k}+d_{2,k}$. Its contribution to the objective value of 
\eqref{roundlp} is  
$ad_{1,k}+b(d_{2,k}+d_{1,j}+d_{2,j})\leq (a+b)d_{1,k}+2bd_{2,k}$, which is at most twice
its contribution to $aC_1+bC_2$.

For the latter, suppose we have an integral solution $(\ttht,\tz)$ to \eqref{roundlp}. 
For every $k\in S$, the assignment cost is at most 
$\ttht d_{1,k}+(1-\ttht)d_{2,k}+3\bB/\ell$. 
Now consider $k\notin S$ with $\sg(k)=j$. If $\tz_{i_1(k)}=1$, it's assignment cost is at most 
$d_{1,k}+3\bB\ell$. Otherwise, it's assignment cost is at most 
$c_{i_2(k)k}+c_{i_1(j)j}+c_{i_2(j)j}\leq d_{2,k}+d_{1,j}+d_{2,j}+9\bB/\ell$. 
Thus, the $\obj(\ell;.)$-cost of our solution is at most the objective value of 
$(\ttht,\tz)+9\bB$, which is at most 
$2(aC_1+bC_2)+9\bB\leq 6\OPT+(9+3\e)\bB\leq\bigl(15+O(\e)\bigr)\bB$. 

\begin{remark}[Improvement to the guarantee stated in Theorem~\ref{unitapx}] \label{improv} 
The following slightly modified way of opening facilities given an integral optimal
solution $(\ttht,\tz)$ to \eqref{roundlp} yields a solution of $\obj(\ell;.)$-cost at most
$12\bB$. 

As before, we open the facilities in $T_1\sm F$ specified by the $\tz_i$ variables that
are 1. If $\ttht=1$, we open all the $T_1$-facilities in $M\sm M_1$, and if $\ttht=0$, we
open all the $T_2$-facilities in $M\sm M_1$. 
For some clients $j\in A$, we may now open a facility at $j$ (instead of at $i_1(j)$
or $i_2(j)$). For every $j\in A$, if $\ttht d_{1,j}+(1-\ttht)d_{2,j}=0$, then we open a
facility at $j$; otherwise, we proceed as before, and open a facility at $i_1(j)$ if
$\ttht=1$ and at $i_2(j)$ if $\ttht=0$.

\medskip
To bound the cost, we first show that every $k\in S$ has assignment cost at most 
$\ttht d_{1,k}+(1-\ttht)d_{2,k}+6\bB/\ell$. If a facility is opened in
$\{k,i_1(k),i_2(k)\}$, then this clearly holds. Otherwise, it must be that $k\notin A$ and
a facility is opened at $j=\sg(k)$; taking $i=i_1(k)$ if $\ttht=1$ and $i_2(k)$ if
$\ttht=0$, we have that $c_{ik}\leq\ttht d_{1,k}+(1-\ttht)d_{2,k}+3\bB/\ell$ and
$c_{ij}\leq 3\bB/\ell$.

Now consider $k\notin S$ with $\sg(k)=j$. If $\tz_{i_1(k)}=1$, it's assignment cost is at
most $d_{1,k}+3\bB\ell$. Otherwise, a facility is opened in $\{j,i_1(j),i_2(j)\}$.
If a facility is opened in $\{j,i_1(j)\}$, then $k$'s assignment cost is at most 
$c_{i_2(k)k}+c_{i_2(j)j}\leq d_{2,k}+d_{1,j}+d_{2,j}+6\bB/\ell$.
Otherwise, it must be that $\ttht=1$ and $d_{1,j}=c_{i_1(j)j}>3\bB/\ell$; in this case,
$k$' assignment cost is at most 
$c_{i_2(k)k}+c_{i_2(j)j}+c_{i_1(j)j}\leq (d_{2,k}+3\bB/\ell)+(d_{2,j}+3\bB/\ell)+d_{1,j}$. 
Thus, the $\obj(\ell;.)$-cost of our solution is at most the objective value of 
$(\ttht,\tz)+6\bB$, which is at most 
$2(aC_1+bC_2)+6\bB\leq 6\OPT+(6+3\e)\bB\leq\bigl(12+O(\e)\bigr)\bB$. This concludes the 
proof of Theorem~\ref{unitapx}.
\end{remark}

\section{The general weighted case} \label{gen}
We now consider the general setting, where we have $n=|\D|$ non-increasing nonnegative
weights $w_1\geq w_2\geq\ldots\geq w_n\geq 0$,  and the goal is to open $k$ centers from
$\D$ and assign each client $j\in\D$ to a center $i(j)\in F$, 
so as to minimize 
$$
\obj\bigl(w; \acost:=\{c_{i(j)j}\}_{j\in\D}\bigr) := w^T\acost^{\down} =
\sum_{j=1}^nw_j\acost^{\down}_j.
$$

By combining the ideas in Section~\ref{unit} with an enumeration procedure due
to~\cite{AouadS}, we obtain  the following result.

\begin{theorem} \label{genthm}
We can obtain an $\bigl(18+O(\e)\bigr)$-approximation algorithm for ordered $k$-median
that runs in time $\poly\bigl((\frac{n}{\e})^{1/\e}\bigr)$. 
\end{theorem}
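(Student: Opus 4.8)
The plan is to reduce the general problem, via a preprocessing step on the weights and an enumeration over the ``cost profile'' of an optimal solution, to running the $\{0,1\}$-weight machinery of Section~\ref{unit} once against a single aggregated (non-metric) cost function that simultaneously accounts for all rank-scales. First I would round every $w_j$ up to the nearest power of $(1+\e)$ (losing a $(1+\e)$ factor) and zero out all $w_j<\e w_1/n$ (their total contribution to $\obj(w;\cdot)$ is at most $\e w_1\cdot\acost^{\down}_1\le\e\,\obj(w;\cdot)$), after which $w$ has only $O\bigl(\tfrac1\e\log\tfrac n\e\bigr)$ distinct nonzero values. Then I would fix positions $0=n_{-1}<n_0<\dots<n_T=n$ that include every breakpoint of $w$ and are refined so that $n_t/n_{t-1}\le 1+\e$; this needs only $T=O\bigl(\tfrac1\e\log\tfrac n\e\bigr)$ positions, and $w$ is constant on each $(n_{t-1},n_t]$. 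Abel summation then gives the lossless identity $\obj(w;\acost)=\sum_{t=0}^T\mu_t\,\obj(n_t;\acost)$ for nonnegative $\mu_t$, and in particular $\iopt=\sum_t\mu_t B^*_t$ where $B^*_t:=\obj(n_t;F^*)$ for an optimal solution $F^*$.

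\emph{Enumerating the profile.} Since $\obj(\ell;\cdot)/\ell$ is non-increasing in $\ell$, the $B^*_t$ satisfy $B^*_{t-1}\le B^*_t\le(n_t/n_{t-1})B^*_{t-1}\le(1+\e)B^*_{t-1}$; after the standard scaling of distances they lie in a range of polynomial aspect ratio. Rounding each $B^*_t$ up to a power of $(1+\e)$ gives a non-decreasing grid sequence $(B_0,\dots,B_T)$ with consecutive ratio at most $(1+\e)^2$ and $B^*_t\le B_t\le(1+\e)B^*_t$; such a sequence is determined by its $O(\tfrac1\e\log\tfrac n\e)$ step-up positions, so there are at most $\poly\bigl((n/\e)^{1/\e}\bigr)$ of them (this monotonicity is exactly what keeps the count polynomial rather than quasipolynomial, and is the essence of the Aouad--Segev enumeration). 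I would enumerate all of them; for the correct guess, $\sum_t\mu_t B_t\le(1+\e)\iopt$.

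\emph{Solving a fixed guess.} For a guessed profile, define $f_{B,m}(d):=d$ if $d>B/m$ and $0$ otherwise (so the $f_{B}$ of Section~\ref{unit} is $f_{B,\ell}$), set $g(d):=\sum_{t=0}^T\mu_t f_{B_t,n_t}(d)$, and consider $\min\bigl\{\sum_{j,i}g(c_{ij})x_{ij}:\ \text{\eqref{casgn}--\eqref{kmed}}\bigr\}$. As in Claims~\ref{relax}--\ref{triangle}: (i) since at most $n_t$ clients have cost exceeding $B_t/n_t\ge B^*_t/n_t$, the integer point of $F^*$ is feasible with objective at most $\sum_t\mu_t B^*_t=\iopt$; and (ii) $g$ is non-decreasing, and because each $f_{B_t,n_t}$ obeys $3f_{B_t,n_t}(d/3)=f_{3B_t,n_t}(d)$ and $\max$-domination over triples, $g$ obeys the same approximate triangle inequality with $\tilde g(d):=3g(d/3)=\sum_t\mu_t f_{3B_t,n_t}(d)$. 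Consequently the whole of Section~\ref{unit} carries over with $f_{\bB}$ replaced by $g$ and $f_{3\bB}$ by $\tilde g$: the dual-ascent/pruning procedure yields the Theorem~\ref{lmp} inequality $3\ld|T|+\sum_{j\in S}g(c_{i(j)j})+\sum_{j\notin S}\tilde g(c_{i(j)j})\le 3\sum_j\al_j$; binary search on $\ld$ produces either a solution with $|T|=k$ or a bipoint solution; and the clustering + LP~\eqref{roundlp} rounding extracts a feasible solution of aggregated $\tilde g$-cost $O(\iopt)$. Finally, the scale-by-scale analogue of Claim~\ref{under}, $\obj(n_t;\acost)\le\sum_j f_{3B_t,n_t}(\acost_j)+3B_t$ (and similarly with $9B_t$ after the extra triangle-inequality applications of the bipoint rounding), combined with $\obj(w;\acost)=\sum_t\mu_t\obj(n_t;\acost)$ and $\sum_t\mu_t B_t\le(1+\e)\iopt$, turns the $O(\iopt)$ bound on the aggregated cost into an $(18+O(\e))\iopt$ bound on $\obj(w;\acost)$; returning the best solution over all guesses, computed in $\poly\bigl((n/\e)^{1/\e}\bigr)$ time, proves Theorem~\ref{genthm}.

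\emph{Main obstacle.} The delicate part is verifying that \emph{every} component of Section~\ref{unit} --- not only Theorem~\ref{lmp}, but the bipoint clustering step and the rounding LP~\eqref{roundlp} together with the triangle-inequality invocations in their analyses (including the improvement of Remark~\ref{improv}, which I expect \emph{not} to transfer cleanly, so that the bipoint case is the bottleneck) --- survives replacing the single non-metric distance $f_{\bB}$ by the aggregated non-metric cost $g=\sum_t\mu_t f_{B_t,n_t}$. This works because $g$ inherits monotonicity and the ``$3g(\cdot/3)$'' form of the approximate triangle inequality from its summands, but tracking how many times that inequality is applied --- hence whether one ends at $3B_t$, $9B_t$, etc.\ --- is precisely what pins down the final constant $18$. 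A secondary point needing care is the counting argument behind the $\poly\bigl((n/\e)^{1/\e}\bigr)$ running time, which relies on the monotonicity $B^*_{t-1}\le B^*_t\le(1+\e)B^*_{t-1}$.
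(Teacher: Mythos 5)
Your proposal is correct in its essentials, but it reaches Theorem~\ref{genthm} by a genuinely different surrogate construction than the paper's. The paper (following Aouad--Segev) buckets \emph{distances} geometrically into intervals $\grp_r$ and enumerates, for each distance bucket, an estimate $w^\est_r$ of the \emph{average weight} of the optimal positions landing there; the proxy cost is $g_{M,w^\est}(d)=w^\est_r d$ for $d\in\grp_r$. You instead bucket \emph{ranks} geometrically (refined by the breakpoints of the rounded weight vector), write $\obj(w;\cdot)=\sum_t\mu_t\,\obj(n_t;\cdot)$ by Abel summation, and enumerate the \emph{partial sums} $B^*_t=\obj(n_t;\vo)$ of the optimal cost vector, yielding the aggregated proxy $g(d)=\sum_t\mu_t f_{B_t,n_t}(d)$. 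Both enumerations are polynomial for the same reason (a monotone sequence with bounded consecutive ratios), and both proxies satisfy the monotonicity and $3g(\cdot/3)$-form triangle inequality that the primal-dual core needs, so the algorithmic engine is shared. What your decomposition buys is that $g$ is a nonnegative combination of single-threshold functions, each obeying the pointwise bound $d\le f_{3B_t,n_t}(d)+3B_t/n_t$ with at most $n_t$ clients charged at scale $t$; this lets the Section~\ref{unit} bipoint analysis (sum-based clustering, LP~\eqref{roundlp}, additive charge $9\sum_t\mu_tB_t\le 9(1+O(\e))\iopt$, multiplicative charge $6\OPT$) transfer essentially verbatim, and in fact yields $15+O(\e)$ rather than $18+O(\e)$ if you track the constants. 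The paper's proxy has no such pointwise bound (the ``additive'' slack at distance $d$ depends on which weight bucket $d$ falls in), which is why the paper must switch the clustering criterion to $\max\{d_{1,j},d_{2,j}\}$, replace the sum in \eqref{roundlp} by $3\max\{d_{1,k},d_{2,k}\}$, and move the bipoint threshold to $a\ge 2/3$, ending at $9\OPT+9\iopt=18+O(\e)$. Your instinct that Remark~\ref{improv} does not transfer is right in both frameworks (its case analysis uses $c_{i_1(j)j}>3\bB/\ell$, which has no single-threshold analogue for an aggregated cost), but this only affects the constant, not correctness. The one step you should write out fully is the scale-by-scale accounting in the bipoint rounding --- bounding $\obj(n_t;\acost)$ by summing the three-hop distance bound over the top $n_t$ clients at each scale $t$ before recombining with the $\mu_t$ --- since that is where the claimed constant is actually pinned down.
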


The key again is to define suitable proxy costs analogous to the $f_B(c_{ij})$s
for the setting with general weights. By defining these appropriately, it will be easy to
argue that the primal-dual algorithm and its analysis extend to the setting with general
weights, since essentially the only property that we use about $\{f_B(c_{ij})\}$ costs
in Section~\ref{unit} is that 
they satisfy Claim~\ref{triangle}. 
A direct extension of $f_B(.)$, based on estimating the optimal $\obj(w;.)$-cost and
defining suitable thresholds, does not yield an $O(1)$-approximation.%
\footnote{It does however lead to an $O(\log n)$-approximation.}
Instead, we utilize a clever enumeration idea due to Aouad and Segev~\cite{AouadS}. 

In Section~\ref{enumidea}, we describe this enumeration procedure using our notation, and
restate the main claims in~\cite{AouadS} in a simplified form. Next, in Section~\ref{kmed}, we
discuss how to adapt the ideas in Section~\ref{unitapx} to the $k$-median problem for the
proxy costs (given by \eqref{proxy}) that we obtain from Section~\ref{enumidea}.
At the end of this section, we combine this ingredients to prove Theorem~\ref{genthm}.

\subsection{Proxy costs and the enumeration idea of~\cite{AouadS}} \label{enumidea}
Throughout, let $\vo$ denote the assignment-cost vector corresponding to an optimal
solution, whose coordinates are sorted in non-increasing order. So the optimal cost $\iopt$
is $\sum_{i=1}^n w_i\vo_i$.
By a standard argument, we can perturb $w$ to eliminate very small weights $w_i$: 
for $i\in[n]$, set $\tw_i=w_i$ if $w_i\geq\frac{\e w_1}{n}$, and $\tw_i=0$ otherwise.

\begin{claim} \label{wtelim}
For any vector $v\in\R_+^n$, we have $(1-\e)\obj(w;v)\leq\obj(\tw;v)\leq\obj(w;v)$.
\end{claim}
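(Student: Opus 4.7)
The plan is to handle the two inequalities separately; both are essentially direct consequences of the definition of $\tw$ and the fact that $v^{\down}$ has nonnegative coordinates. The upper bound is immediate: since $\tw_i \leq w_i$ coordinate-wise by construction, and every entry of $v^{\down}$ is nonnegative, we get $\obj(\tw;v) = \sum_i \tw_i v^{\down}_i \leq \sum_i w_i v^{\down}_i = \obj(w;v)$.

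For the lower bound, I would examine the difference $\obj(w;v) - \obj(\tw;v) = \sum_{i:\, w_i < \e w_1/n} w_i v^{\down}_i$, where the sum ranges over the indices that got zeroed out. Each weight appearing in this sum is strictly less than $\e w_1/n$, and there are at most $n$ such terms. Moreover, since $v^{\down}$ is sorted in non-increasing order, $v^{\down}_i \leq v^{\down}_1$ for every $i$. Combining these observations, the difference is at most $n \cdot (\e w_1/n) \cdot v^{\down}_1 = \e w_1 v^{\down}_1$.

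The final step is to bound $w_1 v^{\down}_1$ by $\obj(w;v)$. This follows because $w_1 v^{\down}_1$ is just one of the (nonnegative) summands in $\sum_i w_i v^{\down}_i$, so $w_1 v^{\down}_1 \leq \obj(w;v)$. Chaining the inequalities yields $\obj(w;v) - \obj(\tw;v) \leq \e\,\obj(w;v)$, i.e., $\obj(\tw;v) \geq (1-\e)\obj(w;v)$, completing the proof.

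There is no real obstacle here; the claim is a routine weight-truncation estimate. The only small point to be careful about is that $\tw$ remains non-increasing (so that $\obj(\tw;v)$ is well defined in the sense used in the paper): this holds because $w$ is itself non-increasing, so the set of indices $i$ with $w_i < \e w_1/n$ is a suffix $\{i^*, i^*+1, \ldots, n\}$, and the truncation only zeroes out a tail.
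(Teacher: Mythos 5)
Your proof is correct and matches the paper's argument essentially verbatim: the upper bound from $\tw_i\le w_i$, and the lower bound by bounding the zeroed-out terms by $n\cdot(\e w_1/n)\cdot v^{\down}_1=\e w_1 v^{\down}_1\le\e\,\obj(w;v)$. You even make explicit the final step $w_1v^{\down}_1\le\obj(w;v)$ that the paper leaves implicit.
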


\begin{proof}
Since $\tw_i\leq w_i$ for all $i\in[n]$, the upper bound on $\obj(\tw;v)$ is immediate.
We have 
$$
\obj(\tw;v)=\sum_{i=1}^n\tw_iv^{\down}_i
=\obj(w;v)-\sum_{i\in[n]:w_i<\e w_1/n}w_iv^{\down}_i
\geq\obj(w;v)-\tfrac{\e w_1}{n}\cdot n v^{\down}_1. \qedhere
$$
\end{proof}

In the sequel, we always work with the $\tw$-weights. 
We guess an estimate $M$ of $\vo_1$, and group distances in the range
$\bigl[\frac{\e M}{n}, M\bigr]$ (roughly speaking) by powers of $(1+\e)$. 
Let $T$ be the largest integer such that $\frac{\e M}{n}(1+\e)^T\leq M$. 
For $r=0,\ldots,T$, we define the distance
interval $\grp_r:=\bigl(\frac{\e M}{n}(1+\e)^{T-r},\frac{\e M}{n}(1+\e)^{T-r+1}\bigr]$.
Note that there are at most
$1+\log_{1+\e}\bigl(\frac{n}{\e}\bigr)=O\bigl(\frac{1}{\e}\log\frac{n}{\e}\bigr)$
intervals. 

Finally, we guess a non-increasing vector $w^\est_0\geq w^\est_1\geq\ldots\geq w^\est_T$,
where the $w^\est_r$s are powers of $(1+\e)$ in the range $[\frac{\e\tw_1}{n},\tw_1(1+\e))$. 
As argued in~\cite{AouadS}, 
there are only
$\exp\bigl(O(\frac{1}{\e}\log(\frac{n}{\e}))\bigr)=O\bigl((\frac{n}{\e})^{1/\e}\bigr)$ 
choices for $w^\est:=(w^\est_0,\ldots,w^\est_T)$. The intention is for $w^\est_r$ to
represent (within a $(1+\e)$-factor) the average $\tw$-weight of the set
$\{i\in[n]:\vo_i\in\grp_r\}$. More precisely, we would like $w^\est_r$ to estimate
the following quantity, for all $r\in\{0,\ldots,T\}$.
\begin{equation}
w^\avg_r:=\begin{cases}
\bigl(\sum_{i\in[n]:\vo_i\in\grp_r}\tw_i\bigr)/|\{i\in[n]:\vo_i\in\grp_r\}| &
\text{if }\{i\in[n]:\vo_i\in\grp_r\}\neq\es; \\
\min\,\{\tw_i:\vo_i\in\bigcup_{s<r}\grp_s\} & \text{if }\bigcup_{s<r}\grp_s\neq\es; \\
\tw_1 & \text{otherwise}.
\end{cases}
\label{wavg}
\end{equation}
The following claim will be useful.

\begin{claim} \label{wavgprop}
For any $r\in\{0,\ldots,T\}$, we have 
$w^\avg_r\geq\max\,\{\tw_i:\vo_i\notin\bigcup_{s\leq r}\grp_s\}$.
\end{claim}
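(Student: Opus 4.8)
The plan is to unfold the three cases of the definition \eqref{wavg} and in each case exhibit, for every index $i$ with $\vo_i \notin \bigcup_{s \le r}\grp_s$, a reason why $\tw_i$ is at most $w^\avg_r$. The crucial structural observation is that the intervals $\grp_0, \grp_1, \ldots, \grp_T$ are arranged so that $\grp_s$ contains \emph{larger} distances than $\grp_{s'}$ whenever $s < s'$: indeed $\grp_r = \bigl(\frac{\e M}{n}(1+\e)^{T-r},\ \frac{\e M}{n}(1+\e)^{T-r+1}\bigr]$, so as $r$ increases the interval slides \emph{down}. Hence the condition $\vo_i \notin \bigcup_{s\le r}\grp_s$, combined with $\vo_i$ lying in the covered range $(\frac{\e M}{n}, M]$, forces $\vo_i \le \frac{\e M}{n}(1+\e)^{T-r}$, i.e. $\vo_i \in \bigcup_{s > r}\grp_s$ (up to boundary cases where $\vo_i$ is below the whole range, which only helps). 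The single subtlety to get right: since $\vo$ is sorted non-increasingly, an index $i$ with small $\vo_i$ has large position $i$, and because $\tw$ is non-increasing this means $\tw_i$ is also small — small position indices carry the large weights. So ``$\vo_i$ small $\Rightarrow$ $i$ large $\Rightarrow$ $\tw_i$ small'' is the chain that makes things work; I must be careful that $\vo$ sorted and $\tw$ sorted are compatible, which they are since both are indexed so that large values come first.

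Concretely: fix $r$ and fix $i$ with $\vo_i \notin \bigcup_{s\le r}\grp_s$. I split on the three cases of \eqref{wavg}.

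In the third case, $w^\avg_r = \tw_1 = \max_m \tw_m \ge \tw_i$ trivially, so the bound is immediate.

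In the second case, $\bigcup_{s<r}\grp_s \neq \es$, and $w^\avg_r = \min\{\tw_m : \vo_m \in \bigcup_{s<r}\grp_s\}$. Pick any $m$ attaining this minimum; then $\vo_m \in \grp_s$ for some $s < r$, so $\vo_m > \frac{\e M}{n}(1+\e)^{T-s+1} \ge \frac{\e M}{n}(1+\e)^{T-r+1}$. Meanwhile $\vo_i \notin \bigcup_{s\le r}\grp_s$ together with $\vo_i$ being in the covered range gives $\vo_i \le \frac{\e M}{n}(1+\e)^{T-r}$ — and if $\vo_i$ falls below the covered range entirely that inequality holds all the more. Either way $\vo_i < \vo_m$, so in the sorted-decreasing vector $\vo$ we have $i > m$, hence $\tw_i \le \tw_m = w^\avg_r$, as desired.

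In the first case, $\{m : \vo_m \in \grp_r\} \neq \es$, and $w^\avg_r$ is the average of $\{\tw_m : \vo_m \in \grp_r\}$. Every such $m$ has $\vo_m > \frac{\e M}{n}(1+\e)^{T-r} \ge \vo_i$ (using again that $\vo_i \le \frac{\e M}{n}(1+\e)^{T-r}$ since $\vo_i \notin \bigcup_{s\le r}\grp_s$), so $i > m$ for every such $m$, hence $\tw_i \le \tw_m$ for all of them, and therefore $\tw_i$ is at most their average $w^\avg_r$. Taking the maximum over all eligible $i$ completes the proof.

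The main obstacle is purely the bookkeeping of which direction the intervals run and the handling of boundary indices — those $i$ with $\vo_i \le \frac{\e M}{n}$ that lie below all the $\grp_s$'s, and the edge of the interval where $\vo_i$ could equal a power-of-$(1+\e)$ breakpoint. These are easily absorbed because in every such case the relevant inequality $\vo_i \le \frac{\e M}{n}(1+\e)^{T-r}$ only becomes slacker, never tighter; no genuine difficulty arises. I expect this claim to be a short lemma whose entire content is the monotone-rearrangement fact that sorting both $\vo$ and $\tw$ in the same order makes ``$\vo_i$ is not among the top distances'' imply ``$\tw_i$ is not among the top weights.''
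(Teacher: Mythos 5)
Your proof is correct and follows essentially the same route as the paper's: in all three cases of \eqref{wavg}, the quantity $w^\avg_r$ is determined by indices $m$ with $\vo_m$ at least as large as any $\vo_i$ outside $\bigcup_{s\leq r}\grp_s$, and the common sorting of $\vo$ and $\tw$ then gives $\tw_m\geq\tw_i$. (There is a harmless off-by-one in your case 2, where the lower endpoint of $\grp_s$ is $\frac{\e M}{n}(1+\e)^{T-s}$ rather than $\frac{\e M}{n}(1+\e)^{T-s+1}$, but the inequality chain you need survives unchanged.)
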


\begin{proof}
If $w^\avg_r$ is defined by cases 1 or 2 of \eqref{wavg}, then the inequality follows since for
every $i'\in\bigcup_{s\leq r}\grp_r$ and $i\notin\bigcup_{s\leq r}\grp_s$, we have
$\tw_{i'}\geq\tw_i$ (since $\vo_{i'}\geq\vo_i$).
If $w^\avg_r$ is defined by case 3 of \eqref{wavg}, then $w^\avg_r=\tw_1$, and again, the
inequality holds.
\end{proof}

Given $M$ and the corresponding intervals $\grp_0,\ldots,\grp_T$, and the vector $w^\est$,
we can now finally define our proxy costs as follows. For $d\geq 0$ and $\gm\geq 1$, define
\begin{equation}
g_{M,w^\est}(\gm; d)=\begin{cases}
\tw_1(1+\e) d & \text{if }d/\gm>\frac{\e M}{n}(1+\e)^{T+1}; \\
w^\est_r d & \text{if $d/\gm\in\grp_r$ (where $r\in\{0,\ldots,T\}$)} \\
0 & \text{if }d/\gm\leq\frac{\e M}{n}.
\end{cases}
\label{proxy}
\end{equation}
The above definition is essentially the scaled surrogate function in~\cite{AouadS}. We
abbreviate $g_{M,w^\est}(1;d)$ to $g_{M,w^\est}(d)$.
The following two key lemmas are analogous to Claims~\ref{relax} and~\ref{under}, and show
that for the right choice of $M$ and $w^\est$, evaluating the above proxy costs
on an assignment-cost vector $\acost$ yields a good estimate of the actual
$\obj(\tw;.)$-cost of $\acost$. Similar statements, albeit stated somewhat differently,
are proved in~\cite{AouadS}. 

\begin{lemma}[adapted from~\cite{AouadS}] \label{newrelax}
Suppose $M\geq\vo_1$ and the $w^\est$ satisfies
$w^\est_r\leq (1+\e)w^\avg_r$ for all $r\in\{0,\ldots,T\}$. 
Then, $\sum_{i=1}^n g_{M,w^\est}(\vo_i)\leq(1+\e)^2\obj(\tw;\vo)$.
\end{lemma}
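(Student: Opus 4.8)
The goal is to show $\sum_{i=1}^n g_{M,w^\est}(\vo_i)\leq(1+\e)^2\obj(\tw;\vo)$, so the plan is to bound the proxy-cost contribution of each coordinate $\vo_i$ by $(1+\e)^2$ times its ``true'' contribution $\tw_i\vo_i$, after a suitable rematching of weights to coordinates. First I would recall that $\obj(\tw;\vo)=\sum_{i=1}^n\tw_i\vo_i$ since $\vo$ is already sorted in non-increasing order, and that $g_{M,w^\est}(\vo_i)$ is determined by which regime $\vo_i$ falls into: the ``high'' regime ($\vo_i>\frac{\e M}{n}(1+\e)^{T+1}$), one of the intervals $\grp_r$, or the ``low'' regime ($\vo_i\leq\frac{\e M}{n}$).

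Next I would handle the three regimes in turn. The low regime is trivial: $g_{M,w^\est}(\vo_i)=0\leq(1+\e)^2\tw_i\vo_i$. For the high regime, observe that $M\geq\vo_1$ forces $\vo_i\le M$ for all $i$; but $\frac{\e M}{n}(1+\e)^{T+1}>M$ by the maximality of $T$ (since $\frac{\e M}{n}(1+\e)^{T+1}>M$ precisely because $\frac{\e M}{n}(1+\e)^{T}\le M<\frac{\e M}{n}(1+\e)^{T+1}$ would be the defining inequality — actually one needs to check $T$ is defined so that $(1+\e)^{T+1}$ overshoots), hence the high regime is empty under the hypothesis $M\ge\vo_1$, and there is nothing to bound. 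The real content is the middle regime: for $i$ with $\vo_i\in\grp_r$, we have $g_{M,w^\est}(\vo_i)=w^\est_r\vo_i\leq(1+\e)w^\avg_r\vo_i$ by hypothesis, so it suffices to show $\sum_{r=0}^T\sum_{i:\vo_i\in\grp_r}w^\avg_r\vo_i\leq(1+\e)\sum_{i=1}^n\tw_i\vo_i$, i.e. we can ``pay'' for replacing the true weight $\tw_i$ by the group-average $w^\avg_r$ with only one extra factor of $(1+\e)$.

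The main obstacle — and the heart of the argument — is this last inequality, since $w^\avg_r$ is an \emph{average} of the weights $\tw_i$ over the group $G_r:=\{i:\vo_i\in\grp_r\}$, and averages can exceed individual values. The plan is a charging/shifting argument exploiting the sortedness of both $\vo$ and $\tw$: within group $G_r$, $\sum_{i\in G_r}w^\avg_r\vo_i = \bigl(\sum_{i\in G_r}\tw_i\bigr)\cdot\frac{\sum_{i\in G_r}\vo_i}{|G_r|}$, and since all $\vo_i$ with $i\in G_r$ lie in a single interval $\grp_r$ of multiplicative width $(1+\e)$, the average $\frac{\sum_{i\in G_r}\vo_i}{|G_r|}$ is within a $(1+\e)$ factor of every individual $\vo_i$, $i\in G_r$; rearranging, $\sum_{i\in G_r}w^\avg_r\vo_i\leq(1+\e)\sum_{i\in G_r}\tw_i\vo_i$ provided we match the largest $\tw_i$ to the largest $\vo_i$ within the group — which holds automatically since both are sorted consistently. (The edge cases of \eqref{wavg} — when $G_r=\es$, where $w^\avg_r$ contributes nothing since the inner sum is empty — need a line each.) Summing over $r$ and combining with the other two regimes gives $\sum_i g_{M,w^\est}(\vo_i)\leq(1+\e)\cdot(1+\e)\obj(\tw;\vo)$, as desired. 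I would double-check the exact form of the interval $\grp_r$ and the definition of $T$ to make sure the ``within $(1+\e)$'' claim for the group average is tight with the stated $(1+\e)^2$ bound and that no off-by-one in the exponent creeps in.
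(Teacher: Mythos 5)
Your proposal is correct and follows essentially the same route as the paper's proof: the high regime is empty because $M\geq\vo_1$ and $\frac{\e M}{n}(1+\e)^{T+1}>M$ by the maximality of $T$, and within each group $\grp_r$ one factor of $(1+\e)$ comes from $w^\est_r\leq(1+\e)w^\avg_r$ and the other from the multiplicative width of the interval (your passage through the group average of the $\vo_i$'s is the same estimate the paper makes via the interval's upper endpoint). The remark about matching the largest $\tw_i$ to the largest $\vo_i$ is unnecessary, since the average of the $\vo_i$'s over the group is at most $(1+\e)\vo_i$ for \emph{every} $i$ in the group, so the bound holds term by term.
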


\begin{proof}
Since $M\geq\vo_1$, there is no $i$ such that $\vo_i>\frac{\e M}{n}(1+\e)^{T+1}$.
Fix $r\in\{0,\ldots,T\}$, and consider all $i\in[n]$ such that
$\vo_i\in\grp_r$.
We have 
\begin{equation*}
\begin{split}
\sum_{i\in[n]:\vo_i\in\grp_r}g_{M,w^\est}(\vo_i)
& =w^\est_r\sum_{i\in[n]:\vo_i\in\grp_r}\vo_i 
\leq\frac{\e M}{n}(1+\e)^{T-r+1}\cdot w^\est_r\cdot\bigl|\{i\in[n]:\vo_i\in\grp_r\}\bigr| \\
& \leq(1+\e)\cdot\frac{\e M}{n}(1+\e)^{T-r+1}\cdot w^\avg_r\cdot\bigl|\{i\in[n]:\vo_i\in\grp_r\}\bigr| \\
& =(1+\e)\cdot\frac{\e M}{n}(1+\e)^{T-r+1}\cdot\sum_{i\in[n]:\vo_i\in\grp_r}\tw_i 
\leq(1+\e)^2\sum_{i\in[n]:\vo_i\in\grp_r}\tw_i\vo_i.
\end{split}
\end{equation*}
It follows that $\sum_{i=1}^ng_{M,w^\est}(\vo_i)\leq(1+\e)^2\obj(\tw;\vo)$.
\end{proof}

\begin{lemma}[adapted from~\cite{AouadS}] \label{newunder}
Let $\gm\geq 1$. Let $M\geq 0$, and suppose $w^\est$ satisfies
$w^\avg_r\leq w^\est_r$ 
for all $r\in\{0,\ldots,T\}$.  
Let $\acost$ be an assignment-cost vector. 
Then, we have the upper bound 
$\obj(\tw;\acost)\leq\sum_{i=1}^n g_{M,w^\est}(\gm;\acost_i)+\gm(1+\e)\obj(\tw;\vo)+\gm\e\tw_1 M$.
\end{lemma}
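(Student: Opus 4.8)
The plan is to compare, term by term in the sorted order, the actual weighted cost $\obj(\tw;\acost)=\sum_i \tw^{\down}_i \acost^{\down}_i$ against $\sum_i g_{M,w^\est}(\gm;\acost_i)$, and show that the ``error'' incurred on each coordinate is controlled by the two additive terms $\gm(1+\e)\obj(\tw;\vo)$ and $\gm\e\tw_1 M$. First I would recall the variational characterization $\obj(\tw;\acost)=\max_\pi \sum_i \tw_i \acost_{\pi(i)}$ noted in the introduction, so it suffices to fix the identity permutation after sorting $\acost$ in non-increasing order, i.e.\ to bound $\sum_{i=1}^n \tw_i \acost^{\down}_i$. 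The key is to classify each index $i$ according to which case of the definition \eqref{proxy} the value $\acost^{\down}_i/\gm$ falls into (``large'', i.e.\ above $\frac{\e M}{n}(1+\e)^{T+1}$; ``medium'', i.e.\ in some interval $\grp_r$; ``small'', i.e.\ at most $\frac{\e M}{n}$), and to bound $\tw_i \acost^{\down}_i$ against $g_{M,w^\est}(\gm;\acost^{\down}_i)$ plus a slack in each case.

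For the ``large'' coordinates, $g_{M,w^\est}(\gm;\acost^{\down}_i)=\tw_1(1+\e)\acost^{\down}_i\geq \tw_i(1+\e)\acost^{\down}_i \geq \tw_i\acost^{\down}_i$, so these contribute no error. For the ``small'' coordinates, $g_{M,w^\est}(\gm;\cdot)=0$, but then $\acost^{\down}_i\leq \gm\frac{\e M}{n}$, so $\sum_{i \text{ small}} \tw_i\acost^{\down}_i \leq \tw_1 \cdot n \cdot \gm\frac{\e M}{n}=\gm\e\tw_1 M$, which is exactly the third additive term. The crux is the ``medium'' coordinates: if $\acost^{\down}_i/\gm\in\grp_r$, then $g_{M,w^\est}(\gm;\acost^{\down}_i)=w^\est_r\acost^{\down}_i$, and I need $\tw_i\acost^{\down}_i \leq w^\est_r\acost^{\down}_i + (\text{slack})$, i.e.\ roughly $\tw_i \leq w^\est_r$ up to the error absorbed by $\gm(1+\e)\obj(\tw;\vo)$. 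Here I would argue: for the particular index $i$ (in sorted order), $\tw_i = \tw^{\down}_i$; I want to relate this to $w^\avg_s$ (hence to $w^\est_s$) for an appropriate $s$. If $\vo_i$ itself lies in $\bigcup_{s\le r}\grp_s$ this is fine because $\tw_i$ is dominated by the relevant average/threshold; the dangerous subcase is when $\vo_i\notin\bigcup_{s\le r}\grp_s$, i.e.\ $\vo_i$ is much smaller than $\acost^{\down}_i$ — and this is precisely where Claim~\ref{wavgprop} enters, giving $w^\avg_r\geq\max\{\tw_i:\vo_i\notin\bigcup_{s\le r}\grp_s\}$, so $\tw_i \leq w^\avg_r \leq w^\est_r$ and the medium term needs no slack either. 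The remaining possibility is that $i$ is so large (in sorted position) that $\acost^{\down}_i/\gm$ is medium while the corresponding $\vo$-coordinate is even smaller, or that the sorted orders of $\acost$ and $\vo$ are misaligned; the clean way to handle this uniformly is a counting/exchange argument: sort both vectors, and for each $i$ either bound $\tw_i\acost^{\down}_i$ directly using $w^\est$ via the case analysis above, or charge the excess $\tw_i(\acost^{\down}_i - (\text{proxy's implicit distance}))$ to $\gm(1+\e)\vo_i\tw_i$, which sums to $\le \gm(1+\e)\obj(\tw;\vo)$.

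Concretely, I would prove the pointwise bound: for every $d\geq 0$ and every weight $\tw_i$ appearing in sorted position $i$, $\tw_i d \leq g_{M,w^\est}(\gm;d) + \gm(1+\e)\tw_i\vo_i + \gm\e\tw_1 M /n \cdot (\text{indicator that } d/\gm \le \tfrac{\e M}{n})$ — more precisely, establish that $\tw^{\down}_i \cdot \acost^{\down}_i \le g_{M,w^\est}(\gm;\acost^{\down}_i) + \gm(1+\e)\tw^{\down}_i\vo^{\down}_i + \gm\tfrac{\e \tw_1 M}{n}$ for each $i$, then sum over $i=1,\dots,n$. Summing the three terms gives exactly $\sum_i g_{M,w^\est}(\gm;\acost^{\down}_i) + \gm(1+\e)\obj(\tw;\vo) + \gm\e\tw_1 M$, as required (using $\sum_i g_{M,w^\est}(\gm;\acost_i)=\sum_i g_{M,w^\est}(\gm;\acost^{\down}_i)$, since $g$ only depends on the multiset of values). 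To get the per-coordinate inequality when $\acost^{\down}_i/\gm\in\grp_r$: set $\al=\frac{\e M}{n}(1+\e)^{T-r}$, so $\acost^{\down}_i\in(\gm\al,\gm\al(1+\e)]$ and $g_{M,w^\est}(\gm;\acost^{\down}_i)=w^\est_r\acost^{\down}_i\geq w^\avg_r\acost^{\down}_i$. If $\vo^{\down}_i > \al$ then $\vo^{\down}_i$ lies in $\grp_s$ for some $s\le r$ (or is large), and either way — by monotonicity of $\tw^{\down}$ and of the $w^\avg_s$/$w^\est_s$ — one checks $\tw^{\down}_i \le w^\avg_r$ (this is the routine verification using the case definitions in \eqref{wavg}); if instead $\vo^{\down}_i \le \al$, then $\vo^{\down}_i\notin\bigcup_{s\le r}\grp_s$, so Claim~\ref{wavgprop} gives $\tw^{\down}_i \le w^\avg_r$ directly. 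In both subcases $\tw^{\down}_i\acost^{\down}_i \le w^\avg_r\acost^{\down}_i \le g_{M,w^\est}(\gm;\acost^{\down}_i)$, with no slack needed.

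\textbf{Main obstacle.} The one genuinely delicate point is the alignment between the sorted order of $\acost$ and the sorted order of $\vo$: the proxy cost $g_{M,w^\est}(\gm;\acost^{\down}_i)$ uses the \emph{estimated} weight $w^\est_r$ tied to the magnitude of $\acost^{\down}_i$, whereas the true objective pairs $\acost^{\down}_i$ with $\tw^{\down}_i$, and a priori $\tw^{\down}_i$ could exceed every $w^\est_r$ with $r$ corresponding to $\acost^{\down}_i$'s magnitude if many $\acost$-coordinates are inflated relative to $\vo$. Claim~\ref{wavgprop} is exactly the tool that rules this out — it says the average/threshold weight $w^\avg_r$ of the $\grp_{\le r}$-band of the \emph{optimal} vector already dominates the weight of every coordinate whose optimal distance falls below that band — so the slack term $\gm(1+\e)\obj(\tw;\vo)$ is in fact only needed for the handful of coordinates that are ``large'' relative to $M$ in $\acost$ but were already paid for by $\vo$ (these get the factor-$(1+\e)$ comparison via the first case of \eqref{proxy}), and the $\gm\e\tw_1 M$ term handles the sub-threshold tail; I expect writing the monotonicity bookkeeping in the medium case cleanly, rather than the idea itself, to be the fiddly part.
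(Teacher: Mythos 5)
Your overall decomposition --- sort $\acost$, pair coordinate $i$ of $\acost^{\down}$ with $\tw_i$ and with $\vo_i$, and do a case analysis on which branch of \eqref{proxy} the value $\acost^{\down}_i/\gm$ falls into --- is exactly the paper's route, and your treatment of the ``large'' and ``small'' cases is correct (the small coordinates are what the $\gm\e\tw_1M$ term absorbs). But the ``medium'' case as you have written it contains a false step. When $\acost^{\down}_i/\gm\in\grp_r$ and $\vo_i\in\grp_s$ for some $s\le r$ (your subcase $\vo^{\down}_i>\al$), you assert that monotonicity gives $\tw^{\down}_i\le w^\avg_r$. The monotonicity goes the other way: indices whose optimal distance lies in an earlier band $\grp_s$, $s<r$, come earlier in the sorted order and therefore carry \emph{larger} weights than the indices averaged in $w^\avg_r$, so in general $\tw_i\ge w^\avg_r$ (e.g.\ $\tw_1=1$, all other weights tiny, $\vo_1\in\grp_0$, and $\acost^{\down}_1/\gm\in\grp_r$ for a band $r$ populated only by tiny-weight indices). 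So the medium case does need slack, and this is precisely what the $\gm(1+\e)\obj(\tw;\vo)$ term is for: in this subcase $\vo_i$ exceeds the lower endpoint of $\grp_r$, hence $\acost^{\down}_i/\gm\le(1+\e)\vo_i$, and one charges $\tw_i\acost^{\down}_i\le\gm(1+\e)\tw_i\vo_i$, which sums to $\gm(1+\e)\obj(\tw;\vo)$. Claim~\ref{wavgprop} handles only the complementary subcase $\vo_i\notin\bigcup_{s\le r}\grp_s$, where indeed $\tw_i\le w^\avg_r\le w^\est_r$ and no slack is needed.

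You do mention the correct charging in passing (``charge the excess \ldots to $\gm(1+\e)\vo_i\tw_i$''), but your concrete per-coordinate bound then drops it and claims the medium case is slack-free, and your closing remark that the $\gm(1+\e)\obj(\tw;\vo)$ term is needed only for coordinates falling in the first (``large'') branch of \eqref{proxy} is inconsistent with your own (correct) observation that those coordinates contribute no error. The fix is short: split the medium case by whether $\acost^{\down}_i/\gm\le(1+\e)\vo_i$; if yes, charge to $\gm(1+\e)\tw_i\vo_i$; if no, then $\vo_i\notin\bigcup_{s\le r}\grp_s$ and Claim~\ref{wavgprop} applies. With that correction your argument coincides with the paper's proof.
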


\begin{proof}
We have 
$$
\obj(\tw;\acost)=\sum_{i=1}^n\tw_i\vc_i
\leq\sum_{i=1}^n g_{M,w^\est}(\gm;\acost_i)
+\sum_{i:\ \tw_i\vc_i>g_{M,w^\est}(\gm;\vc_i)}\mskip-30mu\tw_i\vc_i.
$$
Consider some $i\in[n]$ for which $\tw_i\vc_i>g_{M,w^\est}(\gm;\vc_i)$. 
It must be that $\vc_i/\gm\leq\frac{\e M}{n}(1+\e)^{T+1}$ as otherwise (see \eqref{proxy}), we
have $g_{M,w^\est}(\gm;\vc_i)=(1+\e)\tw_1\vc_i>\tw_i\vc_i$.
If $g_{M,w^\est}(\gm;\vc_i)=0$, then we have 
$\tw_i\vc_i/\gm\leq\tw_i\cdot\frac{\e M}{n}\leq\tw_1\cdot\frac{\e M}{n}$.

Otherwise, we claim that $\vc_i/\gm\leq(1+\e)\vo_i$. Suppose not. 
Suppose $\vc_i/\gm\in\grp_r$, where $r\in\{0,\ldots,T\}$. 
Since $\frac{\vc_i/\gm}{\vo_i}>(1+\e)$, we have that $\vo_i\notin\bigcup_{s\leq r}\grp_s$.
So by Claim~\ref{wavgprop}, we have $w^\avg_r\geq\tw_i$. Hence, 
$g_{M,w^\est}(\gm;\vc_i)=w^\est_r\vc_i\geq w^\avg_r\vc_i\geq\tw_i\vc_i$, which
contradicts our assumption that $\tw_i\vc_i>g_{M,w^\est}(\gm;\vc_i)$.

Putting everything together, we have that
$\sum_{i:\tw_i\vc_i>g_{M,w^\est}(\gm;\vc_i)}\tw_i\vc_i\leq n\gm\tw_1\cdot\frac{\e M}{n}
+\gm(1+\e)\sum_{i\in[n]}\tw_i\vo_i$, which proves the lemma.
\end{proof}

Finally, we show that $g_{M,w^\est}$ satisfies the analogue of Claim~\ref{triangle},
which will be crucial in arguing that our algorithms and analysis from
Section~\ref{unitapx} carry over and allow us to solve, in an approximate sense, the
$k$-median problem with the $\{g_{M,w^\est}(c_{ij})\}$ proxy costs.

\begin{lemma} \label{newtriangle}
For any $\gm\geq 1$, $M\geq 0$, and $w^\est$, we have: 
(i) $g_{M,w^\est}(\gm;x)\leq g_{M,w^\est}(\gm;y)$ if $x\leq y$; and \linebreak
(ii) $3\max\{g_{M,w^\est}(\gm;x),g_{M,w^\est}(\gm;y),g_{M,w^\est}(\gm;z)\}\geq g_{M,w^\est}(3\gm;x+y+z)$ 
for any $x,y,z\geq 0$.
\end{lemma}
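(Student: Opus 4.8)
The proof of Lemma~\ref{newtriangle} should follow the template of Claim~\ref{triangle}, but now the function $g_{M,w^\est}(\gm;\cdot)$ is a step function with three regimes, so the argument must track which regime each of $x,y,z$ lands in. Part (i) is immediate from the definition: as $d$ increases, the value $d/\gm$ moves (weakly) upward through the regimes, and in each regime the multiplier ($0$, then $w^\est_r$ with $r$ non-increasing in $d$ by the sortedness of $w^\est$, then $\tw_1(1+\e)$) is itself weakly increasing in $d$; and the multipliers satisfy $0\leq w^\est_T\leq\cdots\leq w^\est_0\leq\tw_1\leq\tw_1(1+\e)$, so no downward jump occurs at a regime boundary. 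This needs only a one-line check that the thresholds $\frac{\e M}{n}$ and $\frac{\e M}{n}(1+\e)^{T+1}$ are consistent with the interval definitions $\grp_r$.

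For part (ii), I would set $s:=(x+y+z)/3$ and WLOG assume $x=\max\{x,y,z\}$, so $x\geq s$. The goal is $3g_{M,w^\est}(\gm;x)\geq g_{M,w^\est}(3\gm;3s)$, i.e. by part (i) it suffices (since $x\geq s$) to show $3g_{M,w^\est}(\gm;x)\geq g_{M,w^\est}(3\gm;3x)$ — wait, that is too strong; instead the right reduction is: since $g_{M,w^\est}(3\gm;\cdot)$ is monotone (part (i)) and $3s=x+y+z\leq 3x$, we have $g_{M,w^\est}(3\gm;x+y+z)\leq g_{M,w^\est}(3\gm;3x)$, and then it suffices to prove the clean scaling identity $g_{M,w^\est}(3\gm;3d)=3\,g_{M,w^\est}(\gm;d)$ for all $d\geq 0$, applied with $d=x=\max\{x,y,z\}$. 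This identity is the analogue of Claim~\ref{triangle}(iii) and is the crux: the condition ``$(3d)/(3\gm)\in\grp_r$'' is literally the same as ``$d/\gm\in\grp_r$'', and similarly for the top and bottom thresholds, so the regime selected is unchanged; in each regime $g$ is linear in its distance argument (multipliers $\tw_1(1+\e)$, $w^\est_r$, or $0$, none of which depend on $\gm$), so replacing $(\gm,d)$ by $(3\gm,3d)$ multiplies the output by exactly $3$. Combining, $g_{M,w^\est}(3\gm;x+y+z)\leq g_{M,w^\est}(3\gm;3\max\{x,y,z\})=3\,g_{M,w^\est}(\gm;\max\{x,y,z\})\leq 3\max\{g_{M,w^\est}(\gm;x),g_{M,w^\est}(\gm;y),g_{M,w^\est}(\gm;z)\}$, using part (i) for the last step.

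The main thing to be careful about — the only place the argument could go wrong — is verifying the scaling identity at the regime boundaries and confirming that the intermediate regime is indexed by the same $r$ regardless of whether we feed $(\gm,d)$ or $(3\gm,3d)$; since the defining inequalities for $\grp_r$ involve only the ratio $d/\gm$, this is a formal rewrite with no hidden content, so I expect no real obstacle, merely bookkeeping. I would write the scaling identity $g_{M,w^\est}(3\gm;3d)=3\,g_{M,w^\est}(\gm;d)$ as a displayed, separately-stated fact (mirroring Claim~\ref{triangle}(iii)) and then deduce (ii) in two lines as above; part (i) gets one line. This keeps the proof parallel to the $\{0,1\}$-weight case and makes transparent that ``essentially the only property we use'' really is Claim~\ref{triangle}'s analogue.
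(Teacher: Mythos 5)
Your proposal is correct and takes essentially the same route as the paper's (one-line) proof: part (i) by inspection of the definition, and part (ii) from part (i) together with the exact scaling identity $g_{M,w^\est}(3\gamma;3d)=3\,g_{M,w^\est}(\gamma;d)$ (the paper states it as $g_{M,w^\est}(3\gamma;x+y+z)=3\,g_{M,w^\est}(\gamma;\tfrac{x+y+z}{3})$, which is the same thing applied at $d=\tfrac{x+y+z}{3}$). One small inaccuracy in your part~(i) bookkeeping: $w^\est_0\leq\tw_1$ is not guaranteed --- the paper only constrains $w^\est_r\in[\e\tw_1/n,\,\tw_1(1+\e))$ --- but this is immaterial, since at the top regime boundary all you need is $w^\est_0\leq\tw_1(1+\e)$, which does hold.
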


\begin{proof}
Part (i) follows readily from the definition \eqref{proxy}. Part (ii) follows from part
(i) by noting that \linebreak 
\mbox{$g_{M,w^\est}(3\gm;x+y+z)=3g_{M,w^\est}\bigl(\gm;\frac{x+y+z}{3}\bigr)$.}
\end{proof}

\subsection{Solving the $k$-median problem with the $\bigl\{g_{M,w^\est}(c_{ij})\bigr\}$
  proxy costs} \label{proxymed} 
We now work with a fixed guess $M$, $w^\est$, and give an algorithm for finding a
near-optimal $k$-median solution with the $\{g_{M,w^\est}(c_{ij})\}$ proxy costs. Our
algorithm and analysis will be quite similar to the one in Section~\ref{unitapx}. 
The primal and dual LPs we consider are the same as \eqref{primal} and \eqref{dual},
except that we replace all occurrences of $f_B(c_{ij})$ and $f_{\bB}(c_{ij})$ with 
$g_{M,w^\est}(c_{ij})$. Let $\OPT_{M,w^\est}$ denote the optimal value of this LP.

The primal-dual algorithm for a given center-cost $\ld$ (steps
P1--P3 in Section~\ref{unitapx}) is unchanged. The analysis also is essentially identical,
since, previously, we only relied on the fact that the proxy costs satisfy an approximate
triangle inequality, 
which is also true here (Lemma~\ref{newtriangle}). 
We state the guarantee from the primal-dual algorithm slightly
differently, in the form suggested by part (ii) of Lemma~\ref{newtriangle}.
The proof of the following theorem simply mimics the proof of Theorem~\ref{lmp}. 

\begin{theorem} \label{newlmp}
For any $\ld\geq 0$, the primal-dual algorithm (P1)--(P3) returns a set $T$ of centers, an 
assignment $i(j)\in T$ for every $j\in\D$, and a dual feasible solution $(\al,\beta,\ld)$
such that $3\ld|T|+\sum_{j}g_{M,w^\est}(3;c_{i(j)j})\leq 3\sum_j\al_j$. 
\end{theorem}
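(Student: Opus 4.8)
The plan is to replay the proof of Theorem~\ref{lmp} essentially verbatim, substituting the proxy costs $g_{M,w^\est}(c_{ij})$ for the costs $f_{\bB}(c_{ij})$ everywhere, and invoking Lemma~\ref{newtriangle} in place of Claim~\ref{triangle}. First I would observe that the dual-ascent process produces a feasible solution $(\al,\beta,\ld)$ to the dual of the modified LP (i.e., \eqref{dual} with $g_{M,w^\est}(c_{ij})$ replacing $f_{\bB}(c_{ij})$): the constraint $\al_j\le g_{M,w^\est}(c_{ij})+\beta_{ij}$ holds because $\beta_{ij}$ begins to grow, at the same rate as $\al_j$, precisely when $j$ first reaches $i$; and $\sum_j\beta_{ij}\le\ld$ holds because this quantity is frozen once $i$ is added to $F$, which happens as soon as it becomes tight. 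Recall also that $T$ is maximal subject to the property that every $j$ has at most one $i\in T$ with $\beta_{ij}>0$, and set $S=\{j:\exists\,i\in T\ \text{with}\ \beta_{ij}>0\}$.

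For $j\in S$, let $i^*\in T$ be the unique facility with $\beta_{i^*j}>0$. As in the proof of Theorem~\ref{lmp}, the dual-growth dynamics give $\al_j=g_{M,w^\est}(c_{i^*j})+\beta_{i^*j}$, and $i(j)=i^*$ --- in any case $c_{i(j)j}\le c_{i^*j}$, since $i(j)$ is the nearest $T$-facility to $j$. Lemma~\ref{newtriangle}(ii) with $y=z=0$ (and part~(i) to discard the zero terms from the max) gives $g_{M,w^\est}(3;c_{i(j)j})\le 3\,g_{M,w^\est}(c_{i(j)j})\le 3\,g_{M,w^\est}(c_{i^*j})=3\al_j-3\beta_{i^*j}$. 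Since each $i\in T$ has $\sum_j\beta_{ij}=\ld$ and every $j$ contributing to this sum lies in $S$ with $i=i^*$, summing over $j\in S$ yields $\sum_{j\in S}g_{M,w^\est}(3;c_{i(j)j})+3\ld|T|\le 3\sum_{j\in S}\al_j$.

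For $j\notin S$, I would mimic the second half of the proof of Theorem~\ref{lmp}. Let $i\in F$ be the facility that caused $j$ to freeze, so that $g_{M,w^\est}(c_{ij})\le\al_j$. If $i\in T$, then $c_{i(j)j}\le c_{ij}$ and Lemma~\ref{newtriangle}(i),(ii) give $g_{M,w^\est}(3;c_{i(j)j})\le g_{M,w^\est}(3;c_{ij})\le 3\,g_{M,w^\est}(c_{ij})\le 3\al_j$. Otherwise, maximality of $T$ yields $i'\in T$ and a client $k$ with $\beta_{i'k},\beta_{ik}>0$, and the timing argument of Theorem~\ref{lmp} ($\al_j$ grows at least until $i$ joins $F$, while $\al_k$ stops no later than that) gives $\al_k\le\al_j$; hence $g_{M,w^\est}(c_{ik}),g_{M,w^\est}(c_{i'k})\le\al_k\le\al_j$, and also $g_{M,w^\est}(c_{ij})\le\al_j$. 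Using the metric triangle inequality $c_{i(j)j}\le c_{i'j}\le c_{i'k}+c_{ik}+c_{ij}$ together with Lemma~\ref{newtriangle},
\[
g_{M,w^\est}(3;c_{i(j)j})\le g_{M,w^\est}(3;c_{i'k}+c_{ik}+c_{ij})\le 3\max\{g_{M,w^\est}(c_{i'k}),g_{M,w^\est}(c_{ik}),g_{M,w^\est}(c_{ij})\}\le 3\al_j.
\]
Summing over $j\notin S$ and adding the bound from the previous paragraph gives $3\ld|T|+\sum_j g_{M,w^\est}(3;c_{i(j)j})\le 3\sum_j\al_j$, as claimed.

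The one genuinely new point relative to Theorem~\ref{lmp}, and the thing to be careful about, is the bookkeeping with the scaling parameter: the theorem is phrased using $g_{M,w^\est}(3;\cdot)$ uniformly, whereas dual feasibility and the freezing condition naturally produce bounds in terms of $g_{M,w^\est}(\cdot)=g_{M,w^\est}(1;\cdot)$. The passage between the two is exactly the identity $g_{M,w^\est}(3\gm;x+y+z)=3\,g_{M,w^\est}\bigl(\gm;\frac{x+y+z}{3}\bigr)$ from the proof of Lemma~\ref{newtriangle}, so no real obstacle arises. Indeed, since the algorithm itself is unchanged and the only structural property of the costs used in the proof of Theorem~\ref{lmp} was the approximate triangle inequality of Claim~\ref{triangle} --- now furnished by Lemma~\ref{newtriangle} --- the proof is essentially a line-by-line translation.
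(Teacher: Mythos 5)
Your proof is correct and is exactly the argument the paper intends: the paper itself only says that the proof ``simply mimics the proof of Theorem~\ref{lmp},'' and your write-up carries out that translation faithfully, including the one point that needs care --- restating the $j\in S$ bound in terms of $g_{M,w^\est}(3;\cdot)$ via $g_{M,w^\est}(3;x)\le 3\,g_{M,w^\est}(1;x)$. No gaps.
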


Given
Theorem~\ref{newlmp}, we can use binary search on $\ld$, to either obtain:
(a) some $\ld$ such for which we return a solution $T$ with $|T|=k$; or 
(b) $\ld_1<\ld_2$ with $\ld_2-\ld_1<\frac{\e\tw_1M}{n}$ such that letting $T_1$ and 
$T_2$ be the solutions returned for $\ld_1$ and $\ld_2$, we have $k_1:=|T_1|>k>k_2:=|T_2|$. 
In case (a), Theorem~\ref{newlmp} implies that 
$\sum_j g_{M,w^\est}(3;c_{i(j)j})\leq 3\OPT_{M,w^\est}$. 
In case (b), we again extract a low-cost feasible solution from $T_1$ and $T_2$ by
rounding the bipoint solution given by their convex combination. 
As before, $a,b\geq 0$ be such that $ak_1+bk_2=k$, $a+b=1$.
Let $(\al_1,\beta_1)$, $(\al_2,\beta_2)$ denote the dual solutions obtained for $\ld_1$
and $\ld_2$ respectively.
Let $i_1(j)$ and $i_2(j)$ denote the centers to which $j$ is assigned in $T_1$ and $T_2$
respectively. Let $d_{1,j}=g_{M,w^\est}(3;c_{i_1(j)j})$ and $d_{2,j}=g_{M,w^\est}(3;c_{i_2(j)j})$. 
Let $C_1:=\sum_j d_{1,j}$ and $C_2:=\sum_j d_{2,j}$.
Similar to before, we have $aC_1+bC_2\leq 3\OPT_{M,w^\est}+3\e\tw_1M$.
The procedure for rounding this bipoint solution requires only minor changes to steps B1,
B2 in Section~\ref{unitapx}, as we now describe. 

\paragraph{Rounding the bipoint solution obtained from $T_1$, $T_2$.}
If $b\geq 1/3$, then $T_2$ yields a feasible solution with 
$\sum_j g_{M,w^\est}(3;c_{i_2(j)j})=C_2\leq 9\OPT_{M,w^\est}+9\e\tw_1 M$.
So suppose $a\geq 2/3$. 

\begin{enumerate}[label=G\arabic*., topsep=0.5ex, itemsep=0ex, labelwidth=\widthof{G2.},
    leftmargin=!] 
\item {\bf Clustering.}\ 
We match facilities in $T_2$ with a subset of facilities in $T_1$ as follows.
Initialize $\D'\assign\D$, $A\assign\es$, and $M\assign\es$. We repeatedly pick the client 
$j\in\D'$ with minimum $\max\{d_{1,j},d_{2,j}\}$ value, and add $j$ to $A$. 
({\em\textbf{This is the only change, compared to step B1.}})
We add the tuple $(i_1(j),i_2(j))$ to $M$, remove from $\D'$ all clients $k$ (including
$j$) such that  $i_1(k)=i_1(j)$ or $i_2(k)=i_2(j)$, and set $\sg(k)=j$ for all such
clients. Let $M_1=M$ denote the matching so far.
Next, for each unmatched $i\in T_2$, we pick an arbitrary unmatched facility $i'\in T_1$,
and add $(i',i)$ to $M$. 
Let $F$ be the set of $T_1$-facilities that are matched, and 
$S:=\{j\in\D: i_1(j)\in F\}$. Note that $|F|=|M|=k_2$.

\item {\bf Opening facilities.}\ 
This is almost identical to step B2, except that we decide which facilities to open by now
solving the following LP.
\begin{alignat*}{2}
\min & \quad & \sum_{j\in S}\bigl(\tht d_{1,j} & +(1-\tht)d_{2,j}\bigr)
+\sum_{k\notin S}\bigl(z_{i_1(k)}d_{1,k}
+(1-z_{i_1(k)})\cdot 3\max\{d_{1,k},d_{2,k}\}\bigr) \tag{GR-P} \label{newroundlp} \\
\text{s.t.} && \sum_{i\in T_1\sm F}z_i & \leq k-k_2, \qquad 
\tht\in[0,1], \quad z_i\in[0,1] \ \ \forall i\in T_1\sm F.
\end{alignat*}
Let $(\ttht,\tz)$ be an optimal integral solution to \eqref{newroundlp}.
As before, if $\ttht=1$, we open all facilities in $F$, and otherwise, all facilities in
$T_2$. We also the facilities from $T_1\sm F$ for which $\tz_i=1$.
\end{enumerate}

To analyze this, we first show that setting $\tht=a$, $z_i=a$ for all $i\in T_1\sm F$
yields a feasible solution to \eqref{newroundlp} of objective value at most $3(aC_1+bC_2)$. 
We have $\sum_{i\in T_1\sm F}z_i=a(k_1-k_2)=k-k_2$. 
Every $j\in S$ contributes $ad_{1,j}+bd_{2,j}$ to the objective value of
\eqref{newroundlp}. 
Consider $k\notin S$. 
Its contribution to the objective value of \eqref{newroundlp} is  
$$
ad_{1,k}+3b\max\{d_{1,k},d_{2,k}\}=\max\{(a+3b)d_{1,k},ad_{1,k}+3bd_{2,k}\}
\leq 3(ad_{1,k}+bd_{2,k})
$$
where the inequality follows since $a+3b\leq 3a$ when $a\geq 2/3$.
Thus, for every $j\in\D$, its contribution to the objective value of \eqref{newroundlp} is 
at most thrice its contribution to $aC_1+bC_2$.

Suppose $\acost$ is the assignment-cost vector resulting from $(\ttht,\tz)$. 
We show that $\sum_j g_{M,w^\est}(9;\acost_j)$ is at most  the objective value of
$(\ttht,\tz)$ under \eqref{newroundlp}. 
For every $k\in S$, we have 
$g_{M,w^\est}(9;\acost_k)\leq g_{M,w^\est}(3;\acost_k)\leq\ttht d_{1,k}+(1-\ttht)d_{2,k}$. 
Now consider $k\notin S$ with $\sg(k)=j$, so 
$\max\{d_{1,j},d_{2,j}\}\leq\max\{d_{1,k},d_{2,k}\}$. 
If $\tz_{i_1(k)}=1$, then 
$g_{M,w^\est}(9;\acost_k)\leq g_{M,w^\est}(3;\acost_k)\leq d_{1,k}$. Otherwise, 
$\acost_k\leq c_{i_2(k)k}+c_{i_1(j)j}+c_{i_2(j)j}$, and so by 
Lemma~\ref{newtriangle}, we have
\begin{equation*}
\begin{split}
g_{M,w^\est}(9;\acost_k)
& \leq g_{M,w^\est}(9;c_{i_2(k)k}+c_{i_1(j)j}+c_{i_2(j)j}) \\
& \leq 3\max\{g_{M,w^\est}(3;c_{i_2(k)}),g_{M,w^\est}(3;c_{i_1(j)j}),g_{M,w^\est}(3;c_{i_2(j)j})\}
\leq 3\max\{d_{1,k},d_{2,k}\}.
\end{split}
\end{equation*}
So in every case, $g_{M,w^\est}(9;\acost_k)$ is bounded by the contribution of $k$ to the
objective value of $(\ttht,\tz)$. Thus, we have proved the following theorem.

\begin{theorem} \label{newkmedthm}
For any $M\geq 0$, $w^\est$, we can obtain a solution opening $k$ centers whose
assignment-cost vector $\acost$ satisfies
$\sum_j g_{M,w^\est}(9;\acost_j)\leq 9\OPT_{M,w^\est}+9\e\tw_1 M$.
\end{theorem}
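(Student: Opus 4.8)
The plan is to mirror the structure of the proof of Theorem~\ref{unitapx}, replacing the $f_{\bB}$ costs with the proxy costs $g_{M,w^\est}$ and using Lemma~\ref{newtriangle} wherever Claim~\ref{triangle} was used. First I would set up the primal-dual algorithm (P1)--(P3) applied to the LP \eqref{primal}--\eqref{kmed} with $f_{\bB}(c_{ij})$ replaced by $g_{M,w^\est}(c_{ij})$, and invoke Theorem~\ref{newlmp}, whose proof is the verbatim analogue of Theorem~\ref{lmp} (the only fact used about the costs is the approximate triangle inequality, now supplied by Lemma~\ref{newtriangle}(ii), which gives the factor-$3$ blow-up $g_{M,w^\est}(3;\cdot)$ on the right). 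Then I would perform binary search on the Lagrangian parameter $\ld$ to reach one of the two standard outcomes: (a) a single $\ld$ for which $|T|=k$, in which case Theorem~\ref{newlmp} directly gives $\sum_j g_{M,w^\est}(3;c_{i(j)j})\leq 3\OPT_{M,w^\est}\leq 9\OPT_{M,w^\est}+9\e\tw_1M$; or (b) two values $\ld_1<\ld_2$ with $\ld_2-\ld_1<\frac{\e\tw_1M}{n}$ yielding solutions $T_1,T_2$ with $k_1>k>k_2$, giving a bipoint solution to round.

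For outcome (b), I would define $d_{1,j}=g_{M,w^\est}(3;c_{i_1(j)j})$, $d_{2,j}=g_{M,w^\est}(3;c_{i_2(j)j})$, $C_1=\sum_j d_{1,j}$, $C_2=\sum_j d_{2,j}$, and $a,b\geq 0$ with $ak_1+bk_2=k$, $a+b=1$. Combining the two dual feasibility statements from Theorem~\ref{newlmp} exactly as in the $\{0,1\}$ case (using $\ld_2-\ld_1<\frac{\e\tw_1M}{n}$ and $ak_1\leq n$) yields $aC_1+bC_2\leq 3\OPT_{M,w^\est}+3\e\tw_1M$. If $b\geq 1/3$ then $T_2$ alone is feasible with cost $C_2\leq 3(aC_1+bC_2)\leq 9\OPT_{M,w^\est}+9\e\tw_1M$ and we are done; otherwise $a\geq 2/3$ and I run the clustering step G1 (picking at each stage the client minimizing $\max\{d_{1,j},d_{2,j}\}$ — the single change from B1) to build the matching $M$, the set $F$ of matched $T_1$-facilities, and $S=\{j:i_1(j)\in F\}$. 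Then solve the integral LP \eqref{newroundlp} and open facilities according to its optimal integral solution $(\ttht,\tz)$ as described in G2.

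The analysis then splits into two bounds. For the fractional upper bound, set $\tht=a$, $z_i=a$ for all $i\in T_1\sm F$: the capacity constraint holds since $a(k_1-k_2)=k-k_2$, and each client's contribution is at most thrice its contribution to $aC_1+bC_2$, where for $k\notin S$ the key inequality $ad_{1,k}+3b\max\{d_{1,k},d_{2,k}\}\leq 3(ad_{1,k}+bd_{2,k})$ uses precisely $a+3b\leq 3a$, i.e.\ $a\geq 2/3$ (this is exactly why the threshold was changed from $1/2$ to $2/3$). For the rounding bound, I would show that the assignment-cost vector $\acost$ produced by $(\ttht,\tz)$ satisfies $\sum_j g_{M,w^\est}(9;\acost_j)\leq$ (objective of $(\ttht,\tz)$ under \eqref{newroundlp}): for $k\in S$ this is immediate by monotonicity (Lemma~\ref{newtriangle}(i)) since $g_{M,w^\est}(9;\cdot)\leq g_{M,w^\est}(3;\cdot)$; for $k\notin S$ with $\sg(k)=j$, if $\tz_{i_1(k)}=1$ it is again immediate, and otherwise $\acost_k\leq c_{i_2(k)k}+c_{i_1(j)j}+c_{i_2(j)j}$, so Lemma~\ref{newtriangle}(ii) gives $g_{M,w^\est}(9;\acost_k)\leq 3\max\{g_{M,w^\est}(3;c_{i_2(k)k}),g_{M,w^\est}(3;c_{i_1(j)j}),g_{M,w^\est}(3;c_{i_2(j)j})\}\leq 3\max\{d_{1,k},d_{2,k}\}$, where the last step uses $\max\{d_{1,j},d_{2,j}\}\leq\max\{d_{1,k},d_{2,k}\}$ from the clustering rule. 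Chaining everything: $\sum_j g_{M,w^\est}(9;\acost_j)\leq 3(aC_1+bC_2)\leq 9\OPT_{M,w^\est}+9\e\tw_1M$, which is the claim.

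The only genuinely delicate point — and the main thing to get right rather than a real obstacle — is the bookkeeping of the $\gm$-parameter in the proxy costs: one must track that clustering on $\max\{d_{1,j},d_{2,j}\}$ (rather than $d_{1,j}+d_{2,j}$) is what lets a single application of Lemma~\ref{newtriangle}(ii) absorb the three-term sum with only a factor $3$, and that this forces the $a\geq 2/3$ regime, which in turn forces the $b\geq 1/3$ fallback and explains why $g_{M,w^\est}(9;\cdot)$ (rather than a larger scaling) suffices throughout. Everything else is a routine transcription of Section~\ref{unitapx} with $f_{\bB}\leadsto g_{M,w^\est}$, $\bB\leadsto\e\tw_1M$, and the constants adjusted accordingly; note that the additive $\obj(\tw;.)$-slack terms of Lemma~\ref{newunder} are not needed here, since Theorem~\ref{newkmedthm} is stated purely in terms of the proxy-cost objective $\OPT_{M,w^\est}$ and is combined with Lemmas~\ref{newrelax}--\ref{newunder} only later when proving Theorem~\ref{genthm}.
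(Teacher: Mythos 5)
Your proposal is correct and follows essentially the same route as the paper: the same primal-dual/Lagrangian set-up yielding Theorem~\ref{newlmp}, the same bipoint bound $aC_1+bC_2\leq 3\OPT_{M,w^\est}+3\e\tw_1M$ with the $b\geq 1/3$ fallback, the same clustering on $\max\{d_{1,j},d_{2,j}\}$, the same LP \eqref{newroundlp} with the fractional solution $\tht=z_i=a$ and the $a+3b\leq 3a$ step, and the same use of Lemma~\ref{newtriangle} to bound $\sum_j g_{M,w^\est}(9;\acost_j)$ by the LP objective. Your remarks on why the $\max$-based clustering and the $2/3$ threshold are needed match the paper's reasoning exactly.
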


\paragraph{Proof of Theorem~\ref{genthm}.}
The proof follows by combining Theorem~\ref{newkmedthm}, Lemmas~\ref{newrelax}
and~\ref{newunder}, and Claim~\ref{wtelim}. 

Recall that $\vo$ is the assignment-cost vector corresponding to an optimal
solution with coordinates sorted in non-increasing order, and $\iopt=\sum_{i=1}^nw_i\vo_i$
is the optimal cost.  

There are only $n^2$ choices for $M$, and $O\bigl((\frac{n}{\e})^{1/\e}\bigr)$ choices for
$w^\est$, so we may assume that in polynomial time, we have obtained $M=\vo_1$, and
$w^\est_r$s satisfying $w^\avg_r\leq w^\est_r\leq (1+\e)w^\avg_r$ for all
$r\in\{0,\ldots,T\}$.  
By Lemma~\ref{newrelax}, we know that 
$\OPT_{M,w^\est}\leq(1+\e)^2\obj(\tw;\vo)\leq(1+\e)^2\iopt$.
Let $\acost$ be the assignment-cost vector of the solution returned by
Theorem~\ref{newkmedthm} for this $M$, $w^\est$. Combining Theorem~\ref{newkmedthm},
Lemma~\ref{newunder}, and Claim~\ref{wtelim}, we obtain that 
\begin{equation*}
\begin{split}
(1-\e)\obj(w;\acost)\leq \obj(\tw;\acost) 
& \leq\bigl(9\OPT_{M,w^\est}+9\e\tw_1 M\bigr)+9(1+\e)\obj(\tw;\vo)+9\e\tw_1 M \\
& \leq 9(1+\e)^2\iopt+9\iopt+O(\e)\iopt = \bigl(18+O(\e)\bigr)\iopt. \qedhere
\end{split}
\end{equation*}

\appendix

\section{Proof of Theorem~\ref{unitlp}} \label{append-unitlp}
Recall that $\bB\leq(1+\e)\iopt$ is such that $\OPT_{\bB}\leq\bB$, and $\A$ is an
$\al$-approximation algorithm for $k$-median whose approximation guarantee is proved
relative to the natural LP \kmedlp for $k$-median.
Let $(x,y)$ denote an optimal solution to \prim{\bB}, whose objective value is
$\OPT:=\OPT_{\bB}$. Define $\lp_j:=\sum_i f_{\bB}(c_{ij})x_{ij}$ to be the cost
incurred for client $j$ by the LP \prim{\bB}.

Our rounding algorithm is quite simple: we perform 
clustering and demand consolidation to merge clients that are (roughly speaking) within
distance $\bB/\ell$ of each other. This {\em reduces} our instance to a $k$-median
instance, and we then run algorithm $\A$ on this instance. 

\begin{enumerate}[label=R\arabic*., topsep=0.5ex, itemsep=0ex, labelwidth=\widthof{R2.},
    leftmargin=!] 
\item {\bf Clustering and demand consolidation.}\ 
Set $d'_j\assign 0$ for every $j$. Consider the clients in increasing order of $\lp_j$.
For each client $k$ encountered, if there exists a client $j$ such that $d'_j>0$ and 
$c_{jk}\leq 2\bB/\ell$, set $d'_j\assign d'_j+1$, otherwise set $d'_k\assign 1$. 
Let $D'=\{j\in\D: d'_j>0\}$. Each client in $D'$ is a cluster center. For $k\in\D\sm D'$,
we set $\sg(k)=j$, if $k$'s demand was moved to $j$ above; we set $\sg(j)=j$ for all 
$j\in D'$. 

\item {\bf Running $k$-median.}\ Consider the $k$-median instance $\I'$ consisting of the 
weighted point-set $\{d'_j\}_{j\in D'}$ (and the $c_{ij}$-distances between these
points). Note that the points in $\D\sm D'$ do not appear in $\I'$.
We run algorithm $\A$ to solve instance $\I'$ and obtain our $k$ centers.
\end{enumerate}

\noindent
{\it Analysis.}
Let $\OPT':=\sum_{j\in D',i}d'_jf_{\bB}(c_{ij})x_{ij}$ denote the LP-cost of $(x,y)$ for
the modified instance consisting of the cluster centers. 
For each $j\in D'$, define $F_j$ to be all the points $i\in \D'$ such that $j$ is the
point in $D'$ closest to $i$, that is, $F_j:=\{i\in\D': c_{ij}=\min_{j'\in D'}c_{ij'}\}$.
We break ties arbitrarily, so the $F_j$s are disjoint.

\begin{claim} \label{cldm}
(i) If $j,k\in D'$, then $c_{jk}>2\bB/\ell$, and
(ii) $\OPT'\leq\OPT$.
\end{claim}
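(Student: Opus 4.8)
The plan is to read off both parts directly from the clustering rule R1, using two simple structural facts: clients are examined in increasing order of $\lp_j$, and the demand $d'_j$ of a client can become positive only at the single moment $j$ is examined, after which it only increases.

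For part (i), I would fix the order in which R1 examines clients (breaking ties in the $\lp$-values arbitrarily but consistently) and take distinct $j,k\in D'$, say with $j$ examined before $k$. Since $j\in D'$, the assignment $d'_j\gets 1$ must have occurred when $j$ was examined — this is the only step that can turn a zero $d'_j$ positive — and $d'_j$ never decreases afterward, so $d'_j>0$ holds at the time $k$ is examined. Because $k$ also ends up in $D'$, the rule tells us that when $k$ was examined there was no client with positive $d'$-value within distance $2\bB/\ell$ of $k$; in particular $c_{jk}>2\bB/\ell$. (The argument is symmetric if $k$ is examined before $j$.)

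For part (ii), the key point is that demand is ever moved only onto a cluster center that was examined earlier, hence has no larger $\lp$-value. Concretely, $\OPT=\sum_j\lp_j$ and $\OPT'=\sum_{j\in D'}d'_j\lp_j$, and for each $j\in D'$ we have $d'_j=1+|\{k\notin D':\sg(k)=j\}|$. For $k\notin D'$, the client $\sg(k)$ already had positive $d'$-value when $k$ was examined, so it was examined before $k$ and thus $\lp_{\sg(k)}\le\lp_k$. Hence
\[
\OPT'=\sum_{j\in D'}\lp_j+\sum_{j\in D'}\ \sum_{k\notin D':\,\sg(k)=j}\lp_j
=\sum_{j\in D'}\lp_j+\sum_{k\notin D'}\lp_{\sg(k)}
\le\sum_{j\in D'}\lp_j+\sum_{k\notin D'}\lp_k=\sum_j\lp_j=\OPT.
\]

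I do not anticipate a genuine obstacle here; the claim is an immediate consequence of the greedy structure of R1. The only point that needs a little care is the bookkeeping that $d'_j$ turns positive exactly once (at the step $j$ is examined) and is monotone thereafter — this is precisely what lets me conclude both that distinct cluster centers are pairwise far apart and that consolidated demand always lands on a client of no-larger LP cost. For later use it would also be worth remarking in passing that restricting $(x,y)$ to the facilities together with the clients of $D'$ yields a feasible solution to the natural $k$-median LP for the instance $\I'$ of cost $\OPT'$, but this is not needed for the statement of the claim itself.
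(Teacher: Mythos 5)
Your proof is correct and follows essentially the same reasoning as the paper's: part (i) uses that $d'_j$ turns positive only when $j$ is examined and remains positive, so the greedy test would have absorbed $k$ if $c_{jk}\le 2\bB/\ell$; part (ii) uses that demand migrates only to earlier-examined (hence smaller-$\lp$) clients. You simply spell out the bookkeeping in more detail than the paper, which states the key inequality $\lp_{\sg(k)}\le\lp_k$ without elaboration.
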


\begin{proof}
Suppose $k$ was considered after $j$. Then $d'_j>0$ at this time, as $d'_j$ becomes
positive when $j$ is added to $D'$. So if $c_{jk}\leq 2\bB/\ell$ then $d'_k$ would remain
at 0, giving a contradiction. 
It is clear that if we move the demand of client $k$ to client $j$, then $\lp_j\leq\lp_k$ and 
$c_{jk}\leq 2\bB/\ell$. So $\OPT'=\sum_j d'_j\lp_j\leq\sum_j\lp_j=\OPT$. 
\end{proof}

\begin{lemma} \label{kmedbnd}
There is a fractional solution to \kmedlp for the $k$-median instance $\I'$ of objective
value at most $2\OPT'$.
\end{lemma}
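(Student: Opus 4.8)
The goal is to exhibit a feasible fractional solution $(x',y')$ to \kmedlp for the consolidated instance $\I'$ whose $k$-median cost $\sum_{j\in D',i}d'_jc_{ij}x'_{ij}$ is at most $2\OPT'$. The natural candidate is to reuse the LP solution $(x,y)$ from \prim{\bB}: keep $y'=y$, and for each cluster center $j\in D'$, route its demand $d'_j$ through the ``nearby'' facilities that $(x,y)$ used to serve clients in $j$'s neighborhood. The difficulty is that $(x,y)$ is a solution for the proxy-cost objective $\sum f_{\bB}(c_{ij})x_{ij}$, which ignores edges of length $\le\bB/\ell$, so a direct comparison of $c$-cost to $f_{\bB}$-cost fails on short edges; the factor-$2$ slack is exactly what absorbs this.

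\textbf{Key steps.} First I would set up the fractional assignment for $\I'$: for each $i\in\D$, define $F_i\in D'$ to be the cluster center $j$ with $i\in F_j$ (the closest point of $D'$ to $i$), and set $x'_{ij}=\sum_{i'\,:\,F_{i'}=j}x_{i'i}$ — i.e.\ facility $i$'s mass toward $j$ is the total mass $(x,y)$ placed on $i$ from all original points assigned to cluster $j$. Check feasibility: $\sum_i x'_{ij}=\sum_{i'\,:\,F_{i'}=j}\sum_i x_{i'i}\ge |\{i'\in D':F_{i'}=j\}|\ge 1$ (since $j\in F_j$), and $x'_{ij}\le y_i$ follows from summing $x_{i'i}\le y_i$ appropriately — actually one needs $\sum_{i'\,:\,F_{i'}=j}x_{i'i}$, which could exceed $y_i$; to fix this, restrict to a single representative $i'\in D'$ with $F_{i'}=j$ per cluster, or scale — I'll use the representative $j$ itself, so $x'_{ij}:=x_{ji}$ for $j\in D'$. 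Then $\sum_i x'_{ij}=\sum_i x_{ji}\ge 1$ and $x'_{ij}=x_{ji}\le y_i$, and $\sum_i y_i\le k$, so $(x',y)$ is feasible for $\I'$.

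\textbf{Bounding the cost.} Now $\sum_{j\in D',i}d'_j c_{ij}x'_{ij}=\sum_{j\in D'}d'_j\sum_i c_{ji}x_{ji}$. For each $j\in D'$, split the inner sum by whether $c_{ji}>\bB/\ell$ or not. On edges with $c_{ji}>\bB/\ell$ we have $c_{ji}=f_{\bB}(c_{ji})$, contributing at most $\sum_i f_{\bB}(c_{ji})x_{ji}=\lp_j$. On edges with $c_{ji}\le\bB/\ell$, each contributes at most $\tfrac{\bB}{\ell}x_{ji}$, and $\sum_i x_{ji}\le$ (something bounded) — here I'd note that $\sum_{i}x_{ji}$ over \emph{all} $i$ can be large if $y$ is spread out, so instead bound $\sum_{i:c_{ji}\le\bB/\ell}\tfrac{\bB}{\ell}x_{ji}\le\tfrac{\bB}{\ell}$ only if $\sum_i x_{ji}\le 1$, which need not hold. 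The clean fix: by complementary-slackness-style reasoning or by directly truncating, assume WLOG $\sum_i x_{ji}=1$ for each $j$ (scale down any excess; feasibility of \eqref{casgn} is preserved and the objective only decreases). Then the short-edge contribution is at most $\tfrac{\bB}{\ell}$ per cluster center, so the total $k$-median cost is at most $\sum_{j\in D'}d'_j(\lp_j+\tfrac{\bB}{\ell})=\OPT'+\tfrac{\bB}{\ell}\sum_{j\in D'}d'_j=\OPT'+\tfrac{\bB}{\ell}\cdot n'$ where $n'=\sum d'_j=|\D|\ge\ell$... this is not obviously $\le 2\OPT'$.

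\textbf{Main obstacle.} The real subtlety — and the step I expect to be hardest — is exactly this last bound: showing the aggregate short-edge cost $\tfrac{\bB}{\ell}\cdot(\text{number of clients})$ is at most $\OPT'$. This must come from the fact that $\OPT_{\bB}\le\bB$ together with how demands were consolidated: clients with $\lp_j=0$ all have assignment cost $\le\bB/\ell$ in the LP, and there can be at most $\ell$ of them "counted" — but here we're summing over \emph{all} $n$ clients. I suspect the intended argument instead pairs each consolidated client $k$ with its center $\sg(k)=j$ using $c_{jk}\le 2\bB/\ell$ (Claim~\ref{cldm}(i) gives separation, and the consolidation rule gives this bound), so the \emph{real} $k$-median solution cost (original clients routed via cluster centers) telescopes: $c_{i k}\le c_{i j}+c_{jk}\le c_{ij}+2\bB/\ell$, and then one invokes that $\obj(\ell;\cdot)$ only charges $\ell$ of the $2\bB/\ell$ terms — but Lemma~\ref{kmedbnd} is purely about the LP value $2\OPT'$, not $\obj(\ell;\cdot)$. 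So the correct reading: the factor $2$ is slack, and the bound $\sum_{j\in D',i}d'_jc_{ij}x'_{ij}\le 2\OPT'$ should follow because for each edge $(j,i)$ in the support with $c_{ij}\le\bB/\ell$, we can \emph{charge} $c_{ij}x_{ji}\le\tfrac{\bB}{\ell}x_{ji}$, and separately for $j\in D'$ the quantity $d'_j\cdot\tfrac{\bB}{\ell}$ is at most $d'_j\cdot\lp_{j^*}$ where $j^*$ ranges over the (at least $d'_j-$many, actually $d'_j$) clients consolidated into $j$ that have $\lp\ge$ their own assignment cost — making the short contribution at most $\OPT'$ again. I would carry out this charging carefully: that is the crux, and everything else (feasibility, the $f_{\bB}=c$ identity on long edges, disjointness of $F_j$) is routine.
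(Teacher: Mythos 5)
The proposal does not close the gap; the crux you flag is indeed where the argument breaks, and the missing idea is that \emph{facilities, not just clients, must be consolidated to the cluster centers.}

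Two issues. First, feasibility: in the $k$-median instance $\I'$ the point-set is $D'$, so facilities can only be opened at cluster centers, not at arbitrary points of $\D$. Your candidate solution keeps $y'=y$ supported on all of $\D$, so it is not even a solution to \kmedlp for $\I'$. Second, even setting that aside, keeping facilities at their original locations is what creates the short-edge problem you then cannot resolve: a facility $i$ with $c_{ij}\le\bB/\ell$ has $f_{\bB}(c_{ij})=0$, so the LP solution $(x,y)$ may route a full unit of $j$'s demand through such $i$ while paying nothing in $\OPT'$, leaving you with a $c$-cost of up to $\bB/\ell$ per unit of $d'_j$ and no LP charge to absorb it. Your attempted fixes (normalizing $\sum_i x_{ji}=1$, or charging to other consolidated clients) cannot work because those short-edge terms genuinely have zero LP cost.

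The paper's proof sidesteps this by partitioning $\D$ into sets $F_{j'}$ (points whose nearest cluster center is $j'$) and \emph{moving each facility $i\in F_{j'}$ to $j'$}: it sets $Y_{j'}=\sum_{i\in F_{j'}}y_i$ and $X_{j'j}=\sum_{i\in F_{j'}}x_{ij}$. Then a client $j\in D'$ served by its own center $j'=j$ pays cost $0$, while if it is served by a different center $j'\neq j$ via some $i\in F_{j'}$, then Claim~\ref{cldm}(i) forces $c_{ij}>\bB/\ell$ (else $c_{jj'}\le c_{ij}+c_{ij'}\le 2c_{ij}\le 2\bB/\ell$), so $c_{ij}=f_{\bB}(c_{ij})$ and $c_{jj'}\le 2c_{ij}=2f_{\bB}(c_{ij})$. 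Every paid unit of $k$-median cost is thus charged to a positive $f_{\bB}$-term, giving $\sum_{j,j'}d'_jc_{jj'}X_{j'j}\le 2\OPT'$ with no residual short-edge term. The factor $2$ comes from the triangle inequality $c_{jj'}\le 2c_{ij}$, not from absorbing short edges. This facility-consolidation step is the idea your proposal is missing.
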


\begin{proof}
Consider the following fractional solution. For each $j\in D'$, 
set $X_{jj}=\sum_{i\in F_j}y_i=Y_j$; for every distinct $j,j'\in D'$, set
$X_{j'j}=\sum_{i\in F_{j'}}x_{ij}$. It is easy to verify that $(X,Y)$ is a feasible solution
to \kmedlp. Since $y_i\geq x_{ij}$ for all $i,j\in\D'$, we have
$\sum_{j'\in D'}X_{j'j}\geq\sum_{i\in\D'}x_{ij}\geq 1$ for every $j\in D'$, 
and $X_{j'j}\leq Y_{j'}$ for every distinct $j,j'\in\D'$; also 
$\sum_{j\in D'}Y_j=\sum_{i\in\D'}y_i\leq k$.

We now bound the objective value of $(X,Y)$ in \kmedlp (for the weighted point set in
$\I'$). Observe that for any $j\in D'$ and any $i\in F_{j'}$, where $j'\neq j$, we have 
$c_{ij}>\bB/\ell$, as otherwise $c_{jj'}\leq 2\bB/\ell$, contradicting part (i) of
Claim~\ref{cldm}. Therefore, $f_{\bB}(c_{ij})=c_{ij}$, and 
$c_{jj'}\leq 2c_{ij}\leq 2f_{\bB}(c_{ij})$. So we have
$$
\sum_{j,j'\in D'}d'_jc_{jj'}X_{jj'}
\leq \sum_{j,j'\in D': j\neq j'}\sum_{i\in F_{j'}} 2d'_jf_{\bB}(c_{ij})x_{ij}
\leq \sum_{j\in D'}\sum_i 2d'_jf_{\bB}(c_{ij})x_{ij}=2\OPT'. \qedhere
$$
\end{proof}

\begin{lemma} \label{kmedtransf}
Any (integer) solution to the $k$-median instance $\I'$ of cost $C$, yields a solution to
the original $\ell$-centrum instance of $\obj(\ell;.)$-cost of at most $C+2\bB$. 
\end{lemma}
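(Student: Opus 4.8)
The plan is to start from the given integer $\I'$-solution, which opens a set $F$ of (at most $k$) centers and assigns each cluster center $j\in D'$ to some $i(j)\in F$ so that $C=\sum_{j\in D'}d'_j\,c_{i(j)j}$, and to use the \emph{same} set $F$ for the original $\ell$-centrum instance, assigning every original client $k\in\D$ to $i(\sg(k))$. Since $\sg(j)=j$ for $j\in D'$, cluster centers keep their $\I'$-assignment; and for $k\in\D\sm D'$, step R1 guarantees $c_{\sg(k)k}\le 2\bB/\ell$. Writing $\acost$ for the resulting assignment-cost vector, the triangle inequality gives, for every $k\in\D$,
\[
\acost_k=c_{i(\sg(k))k}\le c_{i(\sg(k))\sg(k)}+c_{\sg(k)k}\le c_{i(\sg(k))\sg(k)}+\tfrac{2\bB}{\ell},
\]
where the last term is $0$ when $k\in D'$.

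The second step is to pass from this per-client bound to a bound on $\obj(\ell;\acost)$ without paying $\tfrac{2\bB}{\ell}$ once for every client. Let $\acost'$ be the vector with $\acost'_k:=c_{i(\sg(k))\sg(k)}$, and let $\acost''$ be obtained from $\acost'$ by adding $\tfrac{2\bB}{\ell}$ to every coordinate, so that the displayed inequality reads $\acost\le\acost''$ coordinatewise. Using that $\obj(\ell;\cdot)$ is monotone nondecreasing on $\R_+^\D$ (immediate from $\obj(\ell;x)=\max_\pi\sum_i w_ix_{\pi(i)}$ with $w\ge 0$, a max of monotone functions) and that shifting all coordinates of a vector by a constant $t$ changes $\obj(\ell;\cdot)$ by exactly $\ell t$ (the sorted order is unchanged), I would conclude
\[
\obj(\ell;\acost)\le\obj(\ell;\acost'')=\obj(\ell;\acost')+2\bB.
\]
Finally, $\obj(\ell;\acost')$ is the sum of the $\ell$ largest entries of the multiset $\{c_{i(\sg(k))\sg(k)}:k\in\D\}$, in which each value $c_{i(j)j}$ (for $j\in D'$) occurs with multiplicity $|\{k\in\D:\sg(k)=j\}|=d'_j$; as all entries are nonnegative, this sum is at most the full total $\sum_{j\in D'}d'_j\,c_{i(j)j}=C$. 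Combining the two displays yields $\obj(\ell;\acost)\le C+2\bB$, as claimed.

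The argument is short, and the only point that needs care is the bookkeeping in the second step that keeps the additive loss at $2\bB$ rather than $\tfrac{2n\bB}{\ell}$: a bound that charges each client's detour of $2\bB/\ell$ separately is far too lossy, so it is essential to absorb the detour into a \emph{uniform} coordinate shift \emph{before} invoking the $\ell$-centrum objective, so that the shift contributes only $\ell\cdot\tfrac{2\bB}{\ell}$. (Alternatively one could apply Claim~\ref{under} with $B=\bB$ directly to $\acost$, but controlling $\sum_k f_{\bB}(\acost_k)$ still requires the multiplicity observation above, so the route via monotonicity is the cleanest.)
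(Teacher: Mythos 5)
Your proof is correct and takes essentially the same route as the paper's: assign each original client $k$ to $i(\sg(k))$, use $c_{\sg(k)k}\le 2\bB/\ell$ to bound the detour, and observe that any $\ell$ of the costs $c_{i(\sg(k))\sg(k)}$ sum to at most $\sum_{j\in D'}d'_j\,c_{i(j)j}=C$. The paper compresses the bookkeeping into a single sentence (``the assignment cost of any set of $\ell$ clients is at most $C+2\bB$''), whereas you spell it out via monotonicity of $\obj(\ell;\cdot)$ and the exact effect of a uniform coordinate shift; the underlying argument is identical.
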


\begin{proof}
For $j\in D'$, let $i(j)\in D'$ denote the facility to which $j$ is assigned in the
solution to $\I'$.
For any $k\in\D\sm D'$ with $\sg(k)=j$, its assignment cost for the original instance is
at most $c_{i(j)k}\leq c_{i(j)j}+2\bB/\ell$. Thus, the assignment cost of any set of
$\ell$ clients of the original instance is at most $C+2\bB$.
\end{proof}

Theorem~\ref{unitlp} follows immediately from Lemmas~\ref{kmedbnd} and~\ref{kmedtransf}
and part (ii) of Claim~\ref{cldm}: the $\obj(\ell;.)$-cost of the solution obtained is at
most $\al(2\OPT')+2\bB\leq 2\al\OPT+2\bB\leq 2(\al+1)\bB$.


\begin{thebibliography}{10}

\bibitem{AlamdariS17}
S.~Alamdari and D.~Shmoys.
\newblock A bicriteria approximation algorithm for the $k$-center and $k$-median problems
\newblock To appear in {\em Proceedings of WAOA}, 2017

\bibitem{AouadS}
A.~Aouad and D.~Segev. 
\newblock The ordered $k$-median problem: surrogate models and approximation algorithms. 
\newblock {\em In submission.}

\bibitem{Byrka17}
J.~Byrka, K.~Sornat, and J.Spoerhase.
\newblock Constant-factor approximation for ordered $k$-median.
\newblock {\em CS arXiv, arXiv:1711.01972 [cs.DS]}, Nov 6, 2017.

\bibitem{CharikarGTS02}
M.~Charikar, S.~Guha, {\'E}.~Tardos, and D.~B. Shmoys.
\newblock A constant-factor approximation algorithm for the $k$-median problem.
\newblock {\em Journal of Computer and System Sciences}, 65(1):129--149, 2002.

\bibitem{CharikarL12}
M.~Charikar and S.~Li. 
\newblock A dependent LP-rounding approach for the $k$-median problem. 
\newblock In {\em Proceedings of the 39th ICALP}, pages 194--205, 2012.

\bibitem{JainV01}
K.~Jain and V.~Vazirani.
\newblock Approximation algorithms for metric facility location and k-median
  problems using the primal-dual schema and lagrangian relaxation.
\newblock {\em Journal of the ACM}, 48(2):274--296, 2001.

\bibitem{LaporteNdG15}
G.~Laporte, S.~Nickel, and F.~S.~da Gama. 
\newblock {\em Location Science}, Springer, 2015.

\bibitem{NickelP05}
S.~Nickel and J.~Puerto. 
\newblock {\em Location Theory: A Unified Approach}, Springer Science \& Business Media,
2005.  

\bibitem{PuertoT05}
J.~Puerto and A.~Tamir. 
\newblock Locating tree-shaped facilities using the ordered median objective.
\newblock {\em Mathematical Programming}, 102(2):313--338, 2005.

\bibitem{Tamir01}
A.~Tamir. 
\newblock The $k$-centrum multi-facility location problem. 
\newblock {\em Discrete Applied Mathematics}, 109(3):293--307, 2001.

\end{thebibliography}
\end{document}